\def\bu{\bullet}
\def\marker{\>\hbox{${\vcenter{\vbox{
    \hrule height 0.4pt\hbox{\vrule width 0.4pt height 6pt
    \kern6pt\vrule width 0.4pt}\hrule height 0.4pt}}}$}\>}
\def\gpic#1{#1
     \smallskip\par\noindent{\centerline{\box\graph}} \medskip}
\begin{document}

\newtheorem{theorem}{Theorem}[section]
\newtheorem{lemma}[theorem]{Lemma}
\newtheorem{corollary}[theorem]{Corollary}
\newtheorem{prop}[theorem]{Proposition}
\newtheorem{conj}[theorem]{Conjecture}
\newtheorem{claim}[theorem]{Claim}
\theoremstyle{definition}
\newtheorem{definition}[theorem]{Definition}
\newtheorem{remark}[theorem]{Remark}
\newtheorem{alg}[theorem]{Algorithm}
\newtheorem{obs}[theorem]{Observation}
\def\qed{\ifhmode\unskip\nobreak\hfill$\Box$\bigskip\fi \ifmmode\eqno{Box}\fi}
\newcommand{\diamondplus}{$\kern0.15em\Diamond$\kern-0.73em\raisebox{0.08ex}{$+$}}

\def\nul{\varnothing} 
\def\st{\colon\,}   
\def\VEC#1#2#3{#1_{#2},\ldots,#1_{#3}}
\def\VECOP#1#2#3#4{#1_{#2}#4\cdots #4 #1_{#3}}
\def\SE#1#2#3{\sum_{#1=#2}^{#3}} 
\def\PE#1#2#3{\prod_{#1=#2}^{#3}}
\def\UE#1#2#3{\bigcup_{#1=#2}^{#3}}
\def\CH#1#2{\binom{#1}{#2}} 
\def\FR#1#2{\frac{#1}{#2}}
\def\FL#1{\left\lfloor{#1}\right\rfloor} \def\FFR#1#2{\FL{\frac{#1}{#2}}}
\def\CL#1{\left\lceil{#1}\right\rceil}   \def\CFR#1#2{\CL{\frac{#1}{#2}}}
\def\Gb{\overline{G}}
\def\NN{{\mathbb N}} \def\ZZ{{\mathbb Z}} \def\QQ{{\mathbb Q}}
\def\RR{{\mathbb R}} \def\GG{{\mathbb G}} \def\FF{{\mathbb F}}

\def\B#1{{\bf #1}}      \def\R#1{{\rm #1}}
\def\I#1{{\it #1}}      \def\c#1{{\cal #1}}
\def\C#1{\left | #1 \right |}    
\def\P#1{\left ( #1 \right )}    
\def\ov#1{\overline{#1}}        \def\un#1{\underline{#1}}

\def\la{\langle}
\def\ra{\rangle}
\def\hT{\hat T}
\def\hH{\hat H}
\def\esub{\subseteq}


\title{Spanning Trees in $2$-trees}

\author{
P. Renjith\footnotemark[1],\quad
N. Sadagopan\footnotemark[1],\quad
Douglas B. West\footnotemark[2]
}

\footnotetext[1]{Indian Institute of Information Technology, 
Design, and Manufacturing, Kancheepuram, Chennai, India.}
\footnotetext[2]{Departments of Mathematics, Zhejiang Normal University,
Jinhua, Zhejiang, China, and University of Illinois, Urbana, Illinois, USA.
Research supported by Recruitment Program of Foreign Experts,
1000 Talent Plan, State Administration of Foreign Experts Affairs, China.
}

\date{\today}
\maketitle

\begin{abstract}
A spanning tree of a graph $G$ is a connected acyclic spanning subgraph of $G$.
We consider enumeration of spanning trees when $G$ is a $2$-tree, meaning that
$G$ is obtained from one edge by iteratively adding a vertex whose neighborhood
consists of two adjacent vertices.  We use this construction order both to
inductively list the spanning trees without repetition and to give bounds on
the number of them.  We determine the $n$-vertex $2$-trees having the most and
the fewest spanning trees.  The $2$-tree with the fewest is unique; it has
$n-2$ vertices of degree $2$ and has $n2^{n-3}$ spanning trees.  Those with the
most are all those having exactly two vertices of degree $2$, and their number
of spanning trees is the Fibonacci number $F_{2n-2}$.
\end{abstract}

\baselineskip 16pt

\section{Introduction}
The problem of counting spanning trees in a graph has attracted much attention
\cite{gabow,Ramesh,Kapoor,Matsui,read,Shioura,uno,Yan}.  This number can be
computed efficiently using the Matrix Tree Theorem~\cite{Kirchhoff}.  By
``enumeration'' we instead mean the listing of all spanning trees.  We consider
the special case of $2$-trees, which form a well-known subclass of the family
of chordal graphs.  A {\it $2$-tree} is a graph obtained from the graph 
consisting of two adjacent vertices by iteratively adding one new vertex whose
neighborhood consists of two adjacent vertices.

For general graphs, the tree-listing algorithms in the literature use 
backtracking or generate trees from fundamental cycles~\cite{gabow,read}.
For $n$-vertex $2$-trees, we show that the number of spanning trees is always
between $2^{n-2}$ and $3^{n-2}$, so it is not possible to list them in
polynomial time.  Nevertheless, the structural description of $2$-trees yields
a simple sequential algorithm that lists all spanning trees without repetition.
This approach is simpler than that in \cite{gabow,read}, but the domain of
$2$-trees is much simpler than the contexts of those papers.  Given a vertex
$v$ of degree $2$ in a $2$-tree $G$, we list the spanning trees in $G$ in terms
of the spanning trees in $G-v$, separately those in which $v$ is a leaf and
those in which $v$ is not a leaf. 

This algorithm, presented in Section 2, immediately yields lower and upper
bounds $2^{n-2}$ and $3^{n-2}$ on the number of spanning trees in an $n$-vertex
$2$-tree.  We improve both bounds to obtain optimal values and characterize the
$n$-vertex $2$-trees achieving equality.  We provide the constructions in
Section 3 and prove optimality in Sections 4 and 5.

The minimum number of spanning trees in an $n$-vertex $2$-tree is $n2^{n-3}$,
achieved uniquely by the $2$-tree having $n-2$ vertices of degree $2$.
This $2$-tree is obtained from the graph with two vertices and one edge by
iteratively adding vertices of degree $2$ adjacent to the two original
vertices; it is sometimes called the {\it $n$-book}, denoted $B_n$. 

The $n$-vertex $2$-trees having the most spanning trees are those having
exactly two vertices of degree $2$.  Somewhat surprisingly, all such $2$-trees
have the same number of spanning trees.  The value is asymptotic to
$\frac{1}{\sqrt{5}}(2.618)^{n-2}$; the exact value is the Fibonacci number
$F_{2n-2}$, where $F_0=0$, $F_1=1$, and $F_m=F_{m-1}+F_{m-2}$ for $m\ge2$.
Such $2$-trees include the graphs obtained from an $n$-vertex path by
(1) making one endpoint adjacent to all the other vertices or (2) adding
edges joining vertices separated by distance $2$ along the path. 

We use standard graph-theoretic notation.  In particular, $V(G)$ and $E(G)$
denote the vertex set and edge set of a graph $G$.  The {\it neighborhood}
of a vertex $v$ is the set $N(v)$ of vertices adjacent to $v$, that is, the set
$\{u\in V(G)\colon\, uv\in E(G)\}$.  The {\it complete graph} $K_n$ is the
$n$-vertex graph in which any two vertices are adjacent, and a {\it clique} is
a set of pairwise adjacent vertices.  The subgraph of a graph $G$ {\it induced}
by a vertex subset $S$, written $G[S]$, is the subgraph defined by $V(G[S])=S$
and $E(G[S])=\{uv\in E(G)\colon\, u,v\in S\}$.  When $v\in V(G)$, we write
$G-v$ for the graph $G[V(G)-\{v\}]$. 

A vertex in a graph is {\it simplicial} if its neighborhood is a clique.
A {\it simplicial elimination ordering} of a graph $G$ is an ordering
$(v_n,\ldots,v_1)$ of $V(G)$ such that each $v_i$ is a simplicial vertex of the
induced subgraph $G[\{v_i,\ldots,v_1\}]$.  Traditionally, simplical elimination
orderings have also been called ``perfect elimination orderings''
(see~\cite{golumbic}), but there are now many types of elimination orderings,
so using the word ``simplicial'' is more informative.  Our definition of
$2$-trees explicitly constructs them in the reverse of a simplicial elimination
ordering.

A graph is {\it chordal} if no induced subgraph is a cycle of length at least
$4$.  It was proved by Dirac~\cite{dirac61} that a graph is chordal if and
only if it has a simplicial elimination ordering.  Thus the $2$-trees form a
special class of chordal graphs.  They are the connected chordal graphs having
a simplicial elimination ordering in which except for the last two vertices,
all vertices have degree $2$ when deleted.  We thus call a simplicial
elimination ordering of a $2$-tree a {\it $2$-simplicial ordering}.  Note that
$2$-trees have no cut-vertices, and hence a vertex in a $2$-tree is simplicial
if and only if it has degree $2$. 

\section{Enumeration without Repetition}

Given a $2$-tree $G$ and an associated $2$-simplicial ordering $(\VEC vn1)$,
we present an algorithm to list all spanning trees in $G$ without repetition.
Finding a $2$-simplicial ordering is trivial; a $2$-tree
with more than two vertices has a simplicial vertex of degree $2$, and every
vertex of degree $2$ is simplicial and can be used as the next vertex in the
ordering.  Hence maintaining the vertex degrees makes it very easy to
generate a $2$-simplicial ordering.

Given a $2$-tree $G$ with $2$-simplicial ordering $(v_n,\ldots,v_1)$, let
$G_i=G[v_i,\ldots,v_1]$.  Starting with $G_2$, which is isomorphic to $K_2$,
we iteratively obtain the spanning trees of $G_2,\ldots,G_n$ in order.
In particular, from the spanning trees of $G_{n-1}$ we obtain the spanning
trees of $G_n$, by listing first those in which $v_n$ is a leaf and then
those in which $v_n$ is not a leaf.  The justification of our algorithm is
based on the following observation.

\begin{obs}\label{obs1}
Let $G$ be a $2$-tree in which $v$ is a simplicial vertex with neighborhood
$\{x,y\}$.  A subgraph $T$ of $G$ is a spanning tree in which $v$ is not a leaf
if and only if $v$ has degree $2$ in $T$, the edge $xy$ is not in $T$, and the
spanning subgraph $T'$ of $G-v$ with $E(T')=E(T)\cup\{xy\}-\{vx,vy\}$ is a
spanning tree in $G-v$.
\end{obs}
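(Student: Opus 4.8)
The plan is to prove both directions of the equivalence directly, relying on two elementary facts: a connected graph on $k$ vertices with exactly $k-1$ edges is a tree, and deleting a vertex of degree $2$ from a tree leaves exactly two components, one containing each neighbor. Before starting, record the standing consequences of $v$ being simplicial with $N(v)=\{x,y\}$: the set $\{x,y\}$ is a clique, so $xy\in E(G)$ and hence $xy\in E(G-v)$, while $vx$ and $vy$ are the only edges of $G$ incident to $v$. Put $n=|V(G)|$; then $G-v$ has $n-1$ vertices, a spanning tree of $G$ has $n-1$ edges, and a spanning tree of $G-v$ has $n-2$ edges. The two operations $T\mapsto T'$ (delete the edges $vx,vy$, add the edge $xy$) and $T'\mapsto T$ (delete the edge $xy$, restore $v$ together with $vx$ and $vy$) are mutually inverse on edge sets, so once one direction is established the bookkeeping for the other is symmetric.

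For the forward implication, let $T$ be a spanning tree of $G$ in which $v$ is not a leaf. Since $v$ has degree $2$ in $G$, it has degree exactly $2$ in $T$, so $\{vx,vy\}\subseteq E(T)$; and since $T$ is acyclic, $xy\notin E(T)$, for otherwise $v,x,y$ would form a triangle in $T$. Thus $T'$, the spanning subgraph of $G-v$ with $E(T')=E(T)\cup\{xy\}-\{vx,vy\}$, is well defined and has $(n-1)-2+1=n-2$ edges. Deleting the degree-$2$ vertex $v$ from $T$ splits $T$ into exactly two components, one containing $x$ and one containing $y$, and adding the edge $xy$ reconnects them, so $T'$ is connected. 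A connected spanning subgraph of $G-v$ with $n-2$ edges on $n-1$ vertices is a spanning tree, so $T'$ is a spanning tree of $G-v$, as required.

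For the converse, suppose $v$ has degree $2$ in $T$, that $xy\notin E(T)$, and that $T'$ (defined by the same formula) is a spanning tree of $G-v$. The degree hypothesis gives $\{vx,vy\}\subseteq E(T)$, and then from $E(T')=E(T)\cup\{xy\}-\{vx,vy\}$ we get $E(T)=E(T')\cup\{vx,vy\}-\{xy\}$; in particular $T$ contains $v$ and every vertex of $G-v$, so $T$ is a spanning subgraph of $G$ in which $v$ is not a leaf, and $|E(T)|=(n-2)+2-1=n-1$. Since $xy\in E(T')$, deleting $xy$ from the tree $T'$ produces two components containing $x$ and $y$ respectively, and restoring the vertex $v$ adjacent to both $x$ and $y$ reconnects them; so $T$ is connected. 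A connected spanning subgraph of $G$ with $n-1$ edges on $n$ vertices is a spanning tree, so $T$ is a spanning tree of $G$.

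There is no substantial obstacle here: the observation merely describes a bijection between the spanning trees of $G-v$ that use the edge $xy$ and the spanning trees of $G$ in which $v$ is internal, and the only care needed is the edge-count and connectivity bookkeeping carried out above. The reason for isolating it is that, together with the easy correspondence between spanning trees of $G$ in which $v$ is a leaf and spanning trees of $G-v$ (each extended by one of the two edges $vx,vy$), it is exactly what is needed to justify the listing algorithm of Section~2.
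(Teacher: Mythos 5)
Your proof is correct; the paper states this as an \emph{Observation} with no written proof, and your argument (degree count forcing $vx,vy\in E(T)$ and $xy\notin E(T)$, plus the edge-count and connectivity bookkeeping for the swap $\{vx,vy\}\leftrightarrow\{xy\}$) is exactly the routine verification the authors intend the reader to supply. Nothing is missing.
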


\begin{algorithm}
\caption{{\em Enumerate-Spanning-trees-without-repetitions($G$)}}
\label{mainalgo}
\begin{algorithmic}[1]
\STATE{{\tt Input:}  A 2-tree G with 2-simplicial ordering $(v_n,\ldots,v_1)$, for $n\ge2$;
    define $G_i=G[\{v_i,\ldots,v_1\}]$ for $1\le i\le n$.}
 \STATE{ Initialize LIST with the single spanning tree of $G_2$.}
 \STATE{Let $\{x,y\}$ be the neighborhood of $v_i$ among $\{v_{i-1},\ldots,v_1\}$.}
 \FOR{ each spanning tree $T$ of $G_{i-1}$ in LIST,}
 \STATE{Append to LIST the two spanning trees of $G_i$ obtained by adding the
    edges $v_ix$ or $v_iy$ to $T$.}
 \STATE{ If $xy\in E(T)$, then Append to LIST the spanning tree of $G_i$ obtained
    by replacing $xy$ with the edges $v_ix$ and $v_iy$.}
    \ENDFOR
\end{algorithmic}
\end{algorithm}

\begin{theorem}\label{listall}
For a 2-tree $G$, Algorithm \ref{mainalgo} lists all spanning trees of $G$
without repetition.
\end{theorem}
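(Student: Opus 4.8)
The plan is to prove the theorem by induction on $n$, the number of vertices, showing simultaneously that Algorithm~\ref{mainalgo} lists every spanning tree of $G_i$ exactly once, for each $i$ from $2$ to $n$. The base case $i=2$ is immediate, since $G_2\cong K_2$ has a unique spanning tree, namely itself, and this is precisely what the initialization step places in LIST. For the inductive step, I would assume that after processing $v_{i-1}$ the portion of LIST corresponding to $G_{i-1}$ contains each spanning tree of $G_{i-1}$ exactly once, and then analyze what the \textbf{for}-loop does when it processes $v_i$ with neighborhood $\{x,y\}$.

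The heart of the argument is a partition of the spanning trees of $G_i$ according to the role of $v_i$. Since $v_i$ is simplicial with $N_{G_i}(v_i)=\{x,y\}$, any spanning tree $T$ of $G_i$ must contain at least one of the edges $v_ix,v_iy$ (to reach $v_i$) and cannot contain both together with $xy$ (that would be a triangle), so exactly one of the following holds: (a) $T$ contains exactly one of $v_ix,v_iy$, i.e., $v_i$ is a leaf of $T$; or (b) $T$ contains both $v_ix$ and $v_iy$, i.e., $v_i$ has degree $2$ in $T$, and then $xy\notin E(T)$. In case (a), deleting $v_i$ (and its one incident edge) yields a spanning tree $T'$ of $G_{i-1}$, and conversely each spanning tree $T'$ of $G_{i-1}$ gives exactly two trees of type (a), one for each choice $v_ix$ or $v_iy$; these are exactly the trees produced in the step that appends $T+v_ix$ and $T+v_iy$. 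In case (b), Observation~\ref{obs1} says $T\mapsto T'$ with $E(T')=E(T)\cup\{xy\}-\{v_ix,v_iy\}$ is a bijection between the type-(b) spanning trees of $G_i$ and the spanning trees of $G_{i-1}$ that contain the edge $xy$; and the conditional step appends exactly $T'+v_ix+v_iy-xy$ for each such $T'$. Hence every spanning tree of $G_i$ is produced, and it is produced exactly once because the three appended trees arising from a single $T'\in\mathrm{LIST}$ are pairwise distinct (they differ in which of $v_ix,v_iy,xy$ they contain), trees coming from distinct $T'$ are distinct (they restrict to distinct trees on $G_{i-1}$), and the type-(a) and type-(b) trees are distinguished by $\deg_T(v_i)$. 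Combined with the inductive hypothesis that each $T'$ appears exactly once in LIST, this gives the claim for $G_i$, completing the induction.

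I would also note the one bookkeeping point that makes the sequential description coherent: when the loop processes $v_i$, it iterates over the spanning trees of $G_{i-1}$ currently in LIST, which by induction is the complete repetition-free list produced in the previous round, and it only \emph{appends} new trees, so the trees of $G_{i-1}$ remain available and are not disturbed while $G_i$'s trees are being generated. A remark that $G_i-v_i=G_{i-1}$ (since $v_i$ is the last vertex added in the $2$-simplicial ordering restricted to $\{v_i,\dots,v_1\}$) lets us apply Observation~\ref{obs1} with $G=G_i$ and $v=v_i$.

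The main obstacle, and the only part requiring care, is verifying the \emph{no-repetition} claim rather than mere completeness: one must check both that the (at most) three trees spawned from a single $T'$ are mutually distinct and that trees spawned from different $T'$ cannot coincide. Both follow from the observation that restricting a spanning tree of $G_i$ to the edges within $V(G_{i-1})$ recovers $T'$ in case (a) and $T'-xy$ in case (b), together with tracking $\deg_T(v_i)$ and the identity of the edge(s) at $v_i$; so the reconstruction map is well-defined and injective on the appended list. I expect this to be short once the case split via Observation~\ref{obs1} is in place, so the proof is mostly a matter of assembling these pieces cleanly.
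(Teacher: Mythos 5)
Your proposal is correct and follows essentially the same route as the paper's proof: induction along the $2$-simplicial ordering, splitting the spanning trees of $G_i$ by whether $v_i$ is a leaf, and invoking Observation~\ref{obs1} for the degree-$2$ case. You simply spell out the bijections and the no-repetition bookkeeping in more detail than the paper does.
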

\begin{proof}
Let $G$ have $n$ vertices.  We use induction on $n$, with trivial basis. 
Assume that the algorithm lists the spanning trees of $G_{n-1}$ without
repetition.  Deleting $v_n$ from any spanning tree of $G_n$ where $v_n$ is a
leaf yields a spanning tree of $G_{n-1}$, which in the algorithm produces it in
Step~5.  By Observation~\ref{obs1}, any spanning tree in which $v_n$ is not a
leaf is produced by the algorithm in Step~6.  Thus the algorithm lists all
spanning trees of $G$ without repetition.  Furthermore, since the trees are
built by iteratively adding one vertex, the time taken is proportional to $nt$,
where $t$ is the number of spanning trees.
\end{proof}

\section{Bounds and Constructions} \label{secbound}
Let $T(G)$ denote the number of spanning trees in a $2$-tree $G$.
Again let $G_i=G[\{v_i,\ldots,v_1\}]$, where $(v_n,\ldots,v_1)$ is a
$2$-simplicial ordering of $G$.

\begin{theorem}\label{lubound}
Fix $n\in\NN$ with $n\ge2$.  If $G$ is an $n$-vertex $2$-tree, then
$2^{n-2}\le T(G)\le 3^{n-2}$.
\end{theorem}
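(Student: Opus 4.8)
The plan is to prove both inequalities simultaneously by induction on $n=|V(G)|$, driven by the one‑vertex recurrence that underlies Algorithm~\ref{mainalgo}. The basis $n=2$ is immediate: $G_2\cong K_2$ has exactly one spanning tree, and $2^{0}=3^{0}=1$.

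For the inductive step, let $v_n$ be the last vertex of a $2$‑simplicial ordering of $G$, so that $v_n$ is simplicial (hence of degree $2$) with neighborhood $\{x,y\}$, and $xy\in E(G_{n-1})$. I would split the spanning trees of $G$ into those in which $v_n$ is a leaf and those in which it is not. Deleting $v_n$ sets up a bijection between the first group and the pairs consisting of a spanning tree of $G_{n-1}$ together with a choice of $v_nx$ or $v_ny$, accounting for $2\,T(G_{n-1})$ trees; by Observation~\ref{obs1}, the second group is in bijection with the spanning trees of $G_{n-1}$ that contain the edge $xy$. Writing $q$ for the number of the latter, Theorem~\ref{listall} gives the recurrence
\[
T(G)=2\,T(G_{n-1})+q,\qquad 0\le q\le T(G_{n-1}),
\]
where the bounds on $q$ hold because the spanning trees of $G_{n-1}$ using $xy$ form a subfamily of all of them. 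Hence $2\,T(G_{n-1})\le T(G)\le 3\,T(G_{n-1})$, and since $G_{n-1}$ is an $(n-1)$‑vertex $2$‑tree with $n-1\ge 2$, the induction hypothesis $2^{n-3}\le T(G_{n-1})\le 3^{n-3}$ yields $2^{n-2}\le T(G)\le 3^{n-2}$, completing the induction.

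There is no serious obstacle; the argument is essentially a repackaging of the enumeration algorithm. The only facts one needs are the standard ones that a $2$‑tree on at least three vertices has a vertex of degree $2$ (so the $2$‑simplicial ordering and the recurrence make sense) and that the extremes $q=0$ and $q=T(G_{n-1})$ are exactly what force the lower and upper bounds respectively. Observing that these two extremes can never both occur — indeed each is only rarely attained, since a $2$‑tree has no bridges — is the natural starting point for the sharper estimates of Sections~4 and~5; for the present crude bounds, the recurrence above suffices.
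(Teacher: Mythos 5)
Your proof is correct and follows essentially the same route as the paper: both derive the recurrence $T(G_n)=2T(G_{n-1})+q$ with $0\le q\le T(G_{n-1})$ from the leaf/non-leaf split at the last vertex of a $2$-simplicial ordering (via Observation~\ref{obs1}) and conclude by induction. Your version simply spells out the details that the paper leaves implicit.
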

\begin{proof}
From each spanning tree of $G_{n-1}$, the algorithm generates
two spanning trees of $G_n$ in which $v_n$ is a leaf, and for each spanning
tree containing the edge joining the neighbors of $v_n$, it generates one
more tree in which $v_n$ is not a leaf.  Hence the bounds follow immediately
from Theorem~\ref{listall} by induction on $n$.
\end{proof}

The bounds in Theorem~\ref{lubound} are not sharp.  The constructions we
present in this section actually minimize and maximize $T(G)$ among $n$-vertex
$2$-trees $G$.  We will show in the next two sections that they are all the
extremal $2$-trees.  Recall that the $n$-book $B_n$ is the $n$-vertex $2$-tree
obtained from the complete graph $K_2$ with vertices $x$ and $y$ by iteratively
adding simplicial vertices with neighborhood $\{x,y\}$; it has $n-2$ simplicial
vertices.

\begin{theorem}\label{strictbound}
The $n$-vertex $2$-tree $B_n$ has $n2^{n-3}$ spanning trees.
\end{theorem}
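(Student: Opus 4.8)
The plan is to count the spanning trees of $B_n$ directly by tracking two quantities as we build $B_n$ one simplicial vertex at a time. Write $B_n$ as having hub vertices $x,y$ (with $xy\in E(B_n)$) and simplicial vertices $v_1,\ldots,v_{n-2}$, each adjacent to both $x$ and $y$. For a spanning tree $T$ of $B_n$, the key feature is whether the edge $xy$ lies in $T$ or not, so I would set $a_n$ to be the number of spanning trees of $B_n$ containing $xy$ and $b_n$ the number not containing $xy$, with $T(B_n)=a_n+b_n$. The base case $n=2$ gives $a_2=1$, $b_2=0$.

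Next I would derive the recursion by adding one more simplicial vertex $v$ adjacent to $\{x,y\}$, passing from $B_{n-1}$ to $B_n$. Using Observation~\ref{obs1} and the logic of Algorithm~\ref{mainalgo}: from a spanning tree $T'$ of $B_{n-1}$ we obtain spanning trees of $B_n$ by either attaching $v$ as a leaf to $x$ or to $y$ (always two options, and these keep the status of $xy$ unchanged), or, only when $xy\in E(T')$, by replacing $xy$ with the path $x,v,y$ (this produces a tree not containing $xy$, in which $v$ is not a leaf). Hence a tree counted in $a_n$ must have come from a tree counted in $a_{n-1}$ with $v$ attached as a leaf, giving $a_n=2a_{n-1}$; a tree counted in $b_n$ either came from a tree in $b_{n-1}$ with $v$ attached as a leaf, or from a tree in $a_{n-1}$ by the edge-replacement move, giving $b_n=2b_{n-1}+a_{n-1}$. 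Solving, $a_n=2^{n-2}$, and then $b_n=2b_{n-1}+2^{n-3}$ with $b_2=0$ unwinds to $b_n=(n-2)2^{n-3}$. Therefore $T(B_n)=a_n+b_n=2^{n-2}+(n-2)2^{n-3}=2\cdot 2^{n-3}+(n-2)2^{n-3}=n2^{n-3}$, as claimed.

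I would present the computation of $b_n$ either by a one-line induction (check $b_n=(n-2)2^{n-3}$ satisfies the recursion and the base case) or by dividing the recursion by $2^{n-3}$ to telescope. A sanity check at $n=3$: $B_3=K_3$ has $3$ spanning trees, and $3\cdot 2^{0}=3$; at $n=4$, $B_4=K_4$ minus one edge has $4\cdot 2=8$ spanning trees, which is easily verified by hand. The only point requiring a little care is the verification that the edge-replacement move is the \emph{only} way to create a spanning tree of $B_n$ not containing $xy$ in which $v$ fails to be a leaf, and that such a move always yields a valid spanning tree — but both facts are exactly what Observation~\ref{obs1} provides, so there is no real obstacle here; the proof is short once the two-parameter bookkeeping is set up.
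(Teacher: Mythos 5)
Your proof is correct, but it takes a different route from the paper's. Both arguments split the spanning trees of $B_n$ according to whether they contain the hub edge $xy$, and both arrive at the same two summands $2^{n-2}$ and $(n-2)2^{n-3}$. The paper, however, counts each class \emph{directly}: a tree containing $xy$ is determined by partitioning the $n-2$ simplicial vertices into leaves attached to $x$ and leaves attached to $y$ (hence $2^{n-2}$ such trees), while a tree avoiding $xy$ has exactly one simplicial vertex of degree $2$ serving as the bridge between $x$ and $y$ and the other $n-3$ as leaves (hence $(n-2)2^{n-3}$). You instead set up the two-parameter recursion $a_n=2a_{n-1}$, $b_n=2b_{n-1}+a_{n-1}$ coming from Observation~\ref{obs1} and Algorithm~\ref{mainalgo}, and solve it; the recursion and its solution are right, and your justification that the edge-replacement move is the only source of trees in which the new vertex is not a leaf is exactly what Observation~\ref{obs1} supplies. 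The paper's version is shorter and gives an explicit bijective description of the extremal trees, which is mildly illuminating; your version is more mechanical but fits more naturally into the paper's inductive framework (it is essentially a special case of the $t_p,s_p$ bookkeeping later formalized in Theorem~\ref{maincount}), so either is acceptable here.
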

\begin{proof}
To count the spanning trees directly, note that those containing the edge $xy$
simply partition the remaining vertices into neighbors of $x$ and neighbors of
$y$; hence there are $2^{n-2}$ such trees.  When $xy$ is not used, one of the
$n-2$ remaining vertices is adjacent to both $x$ and $y$, and the remaining
$n-3$ vertices are partitioned into
neighbors of $x$ and neighbors of $y$.  Hence
$T(B_n)=2^{n-2}+(n-2)2^{n-3}=n2^{n-3}$.
\end{proof}

It is well known that every chordal graph that is not a complete graph
has (at least) two nonadjacent simplicial vertices (Dirac \cite{dirac61}).

\begin{lemma}\label{2path}
An $n$-vertex $2$-tree $G$ has exactly two simplicial vertices if and only if
$G$ has a $2$-simplicial ordering $(v_n,\ldots,v_1)$ such that each
$v_i$ is adjacent to $v_{i-1}$, for $2\le i\le n$.  That is, the vertices
$v_n,\ldots,v_1$ form a path in order.
\end{lemma}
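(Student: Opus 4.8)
The plan is to prove both directions by exploiting the standard tree structure underlying a $2$-tree. Recall that a $2$-tree $G$ on $n\ge 2$ vertices has an associated ``construction tree'' (or representation tree): each vertex $v_i$ with $i\ge 3$ is attached to an edge $v_jv_k$ of $G_{i-1}$, and we can record this attachment. The key structural fact I would use is that the simplicial (equivalently, degree-$2$) vertices of $G$ correspond precisely to the edges of $G$ that are ``used'' by at most a bounded number of later vertices, and that $G$ has exactly two degree-$2$ vertices exactly when the attachment structure is a path rather than a branching tree. Concretely, I would set up the forward and backward implications as follows.

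For the backward direction, suppose $G$ has a $2$-simplicial ordering $(v_n,\ldots,v_1)$ in which $v_i$ is adjacent to $v_{i-1}$ for all $i$ with $2\le i\le n$, so that $v_n,\ldots,v_1$ form a path. Then when $v_i$ is added (for $i\ge 3$), its neighborhood is $\{v_{i-1}, w\}$ for some $w\in\{v_{i-2},\ldots,v_1\}$; but since $v_i$ is adjacent to $v_{i-1}$ and the neighborhood must be an edge, and also $v_{i+1}$ (if $i<n$) will be adjacent to $v_i$. I would argue that in the final graph $G$, only $v_1$ and $v_n$ can have degree $2$: the vertex $v_1$ has neighbors only among later vertices that chose it, and by induction one shows $v_1$ and $v_2$ are the two endpoints of the path and every internal vertex $v_i$ acquires degree at least $3$ (it is adjacent to $v_{i-1}$, to $v_{i+1}$, and to its second neighbor $w$, and these are distinct once $i$ is internal). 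The bookkeeping here is routine but needs care about the base cases $v_1, v_2, v_3$.

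For the forward direction, suppose $G$ has exactly two simplicial vertices. Since in a $2$-tree a vertex is simplicial iff it has degree $2$, we are assuming exactly two vertices of degree $2$. Start building a $2$-simplicial ordering greedily: let $v_n$ be a degree-$2$ vertex, and at each stage let $v_i$ be a degree-$2$ vertex of $G_i$. The crux is to show we can always choose $v_i$ to be adjacent to the previously removed vertex $v_{i+1}$'s remaining neighbor — more precisely, to show that $G_i$ has a degree-$2$ vertex lying in the ``right place.'' Here is where I would use the hypothesis: if at some stage $G_i$ had two nonadjacent degree-$2$ vertices neither of which is adjacent to $v_{i+1}$ appropriately, I would trace this back to produce a third degree-$2$ vertex in $G$ (the point being that removing a degree-$2$ vertex from $G_i$ can create at most one new degree-$2$ vertex, and only at the neighbor pair of the removed vertex), contradicting the assumption that $G$ has only two. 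I expect the main obstacle to be precisely this step: carefully controlling how the set of degree-$2$ vertices evolves as we peel off simplicial vertices, and showing that ``exactly two throughout, forming a path'' is forced. Once the ordering is shown to have each $v_i$ adjacent to $v_{i-1}$, the statement that $v_n,\ldots,v_1$ form a path in $G$ is immediate since all these edges lie in $G$.

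A cleaner alternative for the forward direction, which I would try first, is to use the representation tree $R$ of $G$ directly: its nodes are the edges of $G$ (or the "bags" $\{v_i, x, y\}$), and a known fact is that the number of simplicial vertices of $G$ equals the number of leaves of $R$ plus a correction from the two original vertices. If exactly two simplicial vertices forces $R$ to be a path, then reading off $R$ as a path yields the desired ordering with consecutive adjacency directly. Whichever route is cleaner to state rigorously, the essential content is the same: ``two degree-$2$ vertices'' $\iff$ ``no branching in the construction'' $\iff$ ``the vertices can be linearly ordered along a Hamiltonian path respecting the elimination order.''
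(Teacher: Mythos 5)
Your overall strategy is essentially the paper's: both rest on the observation that deleting a simplicial vertex from a $2$-tree creates at most one new vertex of degree $2$, located in the neighborhood of the deleted vertex, so the number of simplicial vertices cannot increase as simplicial vertices are peeled off. Your forward direction (peel greedily and trace any failure of consecutive adjacency back to a third simplicial vertex of $G$) is the paper's argument in contrapositive form, and it does go through once you add Dirac's guarantee that every intermediate $G_i$ with at least four vertices still has two nonadjacent simplicial vertices, so that the count is pinned at exactly two throughout.

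The one concrete error is in your backward direction: it is false that ``only $v_1$ and $v_n$ can have degree $2$.'' Take $G$ to be the fan, with $v_1$ adjacent to all of $v_2,\ldots,v_n$ and with $v_2\cdots v_n$ a path; the ordering $(v_n,\ldots,v_1)$ satisfies the hypothesis, yet the two degree-$2$ vertices are $v_2$ and $v_n$, while $v_1$ has degree $n-1$. Your argument that every internal vertex is adjacent to $v_{i-1}$, to $v_{i+1}$, and to a second neighbor $w$ breaks precisely at $i=2$: the base vertex $v_2$ was never ``added'' and has no such $w$, and it may well retain degree $2$. The correct conclusion is that the degree-$2$ vertices are $v_n$ together with exactly one of $v_1,v_2$; this needs a short extra argument (for $j\ge4$ the vertex $v_j$ can choose $v_1$ or $v_2$ as its second neighbor only if $v_{j-1}$ already did, so tracing back to $j=4$ shows that exactly one of $v_1,v_2$ ever gains a third neighbor). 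With that repair your proof is complete. The paper avoids the degree count entirely by running the same monotonicity fact in the ``building up'' direction for both implications, which is slightly slicker but not substantively different.
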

\begin{proof}
Since a vertex in a $2$-tree with $n\ge3$ is simplicial if and only if it has
degree $2$, the neighborhood in $G$ of a simplical vertex $v$ contains at most
one simplicial vertex of $G-v$ (if $n\ge3$).  Hence $G$ has at least as many
simplicial vertices as $G-v$.

The only $2$-tree with four vertices has exactly two simplicial vertices.  If
$n\ge5$, then $G-v$ has the same number of simplicial vertices as $G$ if and
only if $v$ is adjacent to some simplicial vertex of $G-v$.  Hence in
generating a larger $2$-tree from the $4$-vertex $2$-tree by the reverse of
a $2$-simplicial ordering, the number of simplicial vertices remains $2$ if and
only if each subsequent added vertex is adjacent to a current simplicial vertex.

When there are exactly two simplicial vertices, this process can be reversed,
because any simplicial vertex can be deleted next in a $2$-simplicial ordering.
In particular, we can {\it avoid} deleting one of the two original simplicial
vertices.  Since the number of simplicial vertices cannot decrease below $2$,
each vertex we delete from the ``other end'' is adjacent to one simplicial
vertex, thus forming the desired path.
\end{proof}

Various $2$-trees having exactly two simplicial vertices.  One is the
``square'' of the $n$-vertex path $P_n$, with vertices
$\{v_n,\ldots,v_1\}$ and edges $\{v_iv_j\colon\, |i-j|\le 2\}$.  Another is
obtained from $P_n$ by making one endpoint adjacent to all the other vertices.
The number of spanning trees in the latter graph has been well studied (for
example, see \cite{hilton}).  Surprisingly, all such $2$-trees with
$n$-vertices have the same number of spanning trees.  We prove a more detailed
statement for use later in proving the extremal result.

\begin{definition}\label{tseq}
For a set $S$ of edges in a graph $G$, let $T(G;S)$ denote the number of
spanning trees of $G$ containing $S$; we write simply $T(G;e)$ when $S=\{e\}$.
Also, the Fibonacci sequence $\la F\ra$ is defined by $F_0=0$,
$F_1=1$, and $F_n=F_{n-1}+F_{n-2}$ for  $n\ge2$.
(The solution formula is
$F_n=\FR1{\sqrt5}[\phi^n-(-\phi)^{-n}]$, where $\phi=(1+\sqrt5)/2$.)
\end{definition}


\begin{theorem}\label{maincount}
Let $e_0$ be an edge in a $2$-tree $\hH$, and let $\alpha=T(\hH)$ and
$\beta=T(\hH;e_0)$.  For $p\ge1$, let $G_p$ be a $2$-tree grown from $\hH$ by
successively adding simplicial vertices $\VEC w1p$ such that for $i\ge2$ the
neighbors of $w_i$ are the endpoints of an edge $e_{i-1}$ incident to $w_{i-1}$
($w_1$ is adjacent to the endpoints of $e_0$).  Letting $t_p=T(G_p)$ and
$s_p=T(G_p;e_p)$, we have
$$
t_p=F_{2p+1}\alpha+F_{2p}\beta
\qquad {\textrm and}\qquad s_p=F_{2p}\alpha+F_{2p-1}\beta.
$$
\end{theorem}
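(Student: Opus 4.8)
The plan is to prove both formulas simultaneously by induction on $p$, tracking the pair $(t_p, s_p)$ through a recurrence. First I would establish the base case $p=1$: here $G_1$ is obtained from $\hH$ by adding one simplicial vertex $w_1$ whose neighbors are the endpoints $x,y$ of $e_0$. Using Observation~\ref{obs1} together with the leaf/non-leaf split from Algorithm~\ref{mainalgo}, a spanning tree of $G_1$ either has $w_1$ as a leaf (attached to $x$ or to $y$, giving $2\alpha$ trees) or has $w_1$ of degree $2$, which requires $e_0\notin E(T)$ and corresponds bijectively to spanning trees of $\hH$ containing $e_0$, contributing $\beta$ trees. Hence $t_1 = 2\alpha + \beta$, which matches $F_3\alpha + F_2\beta = 2\alpha+\beta$. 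For $s_1 = T(G_1;e_1)$, where $e_1$ is an edge at $w_1$, say $e_1 = w_1x$: a spanning tree containing $w_1x$ either also omits $w_1y$ (so $w_1$ is a leaf at $x$: these correspond to all spanning trees of $\hH$, giving $\alpha$) or contains $w_1y$ too (so $w_1$ has degree $2$, giving $\beta$ as above). Thus $s_1 = \alpha + \beta = F_2\alpha + F_1\beta$, as claimed.

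Next I would set up the inductive step. Suppose the formulas hold for $p-1$, so $t_{p-1} = F_{2p-1}\alpha + F_{2p-2}\beta$ and $s_{p-1} = F_{2p-2}\alpha + F_{2p-3}\beta$. Now $G_p$ is obtained from $G_{p-1}$ by adding a simplicial vertex $w_p$ whose neighbors are the endpoints of $e_{p-1}$, an edge incident to $w_{p-1}$. Viewing $G_{p-1}$ as playing the role of $\hH$ with distinguished edge $e_{p-1}$, the base-case computation applied verbatim gives
$$
t_p = 2t_{p-1} + T(G_{p-1}; e_{p-1}) = 2t_{p-1} + s_{p-1}.
$$
For $s_p = T(G_p; e_p)$ with $e_p$ an edge at $w_p$ (one endpoint of $e_p$ is $w_p$, the other is an endpoint of $e_{p-1}$), the same leaf/non-leaf analysis gives $s_p = t_{p-1} + s_{p-1}$. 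So the transfer matrix is $\begin{pmatrix} t_p \\ s_p\end{pmatrix} = \begin{pmatrix} 2 & 1 \\ 1 & 1\end{pmatrix}\begin{pmatrix} t_{p-1} \\ s_{p-1}\end{pmatrix}$.

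It then remains to verify that the closed forms satisfy this recurrence, which is a routine Fibonacci identity: $t_p = 2t_{p-1}+s_{p-1} = (2F_{2p-1}+F_{2p-2})\alpha + (2F_{2p-2}+F_{2p-3})\beta$, and using $2F_{k}+F_{k-1} = F_{k}+F_{k+1} = F_{k+2}$ gives $t_p = F_{2p+1}\alpha + F_{2p}\beta$; similarly $s_p = t_{p-1}+s_{p-1} = (F_{2p-1}+F_{2p-2})\alpha + (F_{2p-2}+F_{2p-3})\beta = F_{2p}\alpha + F_{2p-1}\beta$.

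The one point that needs genuine care — the main obstacle — is justifying the two recurrences $t_p = 2t_{p-1}+s_{p-1}$ and $s_p = t_{p-1}+s_{p-1}$ cleanly, i.e. correctly identifying, via Observation~\ref{obs1}, which spanning trees of $G_{p-1}$ extend in which ways and checking that the edge $e_{p-1}$ is indeed available to play the role of $e_0$ at each stage (this is exactly the hypothesis that $e_{p-1}$ is incident to $w_{p-1}$, hence lies in $G_{p-1}$ and has its two endpoints adjacent, so adding $w_p$ on it is legitimate). Once the bijections behind Observation~\ref{obs1} are spelled out for the conditioned counts $T(G_{p-1};e_{p-1})$ and $T(G_p;e_p)$, the rest is bookkeeping with Fibonacci identities.
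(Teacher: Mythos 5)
Your proposal is correct and follows essentially the same route as the paper: both arguments derive the recurrences $t_p=2t_{p-1}+s_{p-1}$ and $s_p=t_{p-1}+s_{p-1}$ from the leaf/degree-two split at the new simplicial vertex and then verify the closed forms by the Fibonacci recurrence. The only cosmetic difference is that the paper anchors the induction at $p=0$ via the convention $F_{-1}=1$, whereas you compute the $p=1$ case explicitly.
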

\begin{proof}
We use induction on $p$.  Set $G_0=\hH$, so $t_0=\alpha$ and $s_0=\beta$.
Extending the Fibonacci sequence by $F_{-1}=1$ (preserving the recurrence), the
claim holds for $p=0$.

Consider $p\ge1$.  As in Theorem~\ref{listall}, we have
$t_p=2t_{p-1}+s_{p-1}$ and $s_{p}=t_{p-1}+s_{p-1}$.  Using the induction
hypothesis, 
\begin{eqnarray*}
t_p&=&(2F_{2p-1}\alpha+2F_{2p-2}\beta)+(F_{2p-2}\alpha+F_{2p-3}\beta)
~=~ F_{2p+1}\alpha+F_{2p}\beta,\\
s_p&=&(F_{2p-1}\alpha+F_{2p-2}\beta)+(F_{2p-2}\alpha+F_{2p-3}\beta)
~=~F_{2p}\alpha+F_{2p-1}\beta.
\end{eqnarray*}

\vspace{-2.4pc}
\qquad
\end{proof}

\begin{corollary}\label{2simp}
For an $n$-vertex $2$-tree $G$ with exactly two simplicial vertices,
$T(G)=F_{2n-2}$.  The value is asymptotic to
$\FR1{\sqrt5}[(1+\sqrt5)/2]^{2n-2}$, which is approximately $.1708(2.618)^n$.
\end{corollary}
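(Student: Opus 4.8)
The plan is to derive Corollary~\ref{2simp} as a direct specialization of Theorem~\ref{maincount}. First I would invoke Lemma~\ref{2path}: since $G$ is an $n$-vertex $2$-tree with exactly two simplicial vertices, it has a $2$-simplicial ordering $(v_n,\ldots,v_1)$ in which the vertices form a path in order. I would peel off the first two vertices of the construction, taking $\hH=G[\{v_2,v_1\}]$, which is isomorphic to $K_2$; its single edge is $e_0=v_2v_1$. Then $\alpha=T(\hH)=1$ and $\beta=T(\hH;e_0)=1$, since the only spanning tree of $K_2$ is the edge itself. The remaining $n-2$ vertices $v_3,\ldots,v_n$ are added one at a time, and by the path structure each $v_{i}$ (for $i\ge3$) is adjacent to $v_{i-1}$ and one earlier vertex, so its neighborhood is the pair of endpoints of an edge incident to the previously added vertex. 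Hence $G=G_p$ with $p=n-2$ in the notation of Theorem~\ref{maincount}.

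Next I would substitute into the formula $t_p=F_{2p+1}\alpha+F_{2p}\beta$ from Theorem~\ref{maincount}. With $p=n-2$, $\alpha=\beta=1$, this gives
$$
T(G)=F_{2(n-2)+1}+F_{2(n-2)}=F_{2n-3}+F_{2n-4}=F_{2n-2},
$$
using the Fibonacci recurrence in the last step. For the asymptotic statement I would plug into the closed-form solution $F_m=\frac{1}{\sqrt5}[\phi^m-(-\phi)^{-m}]$ recorded in Definition~\ref{tseq}, with $\phi=(1+\sqrt5)/2$. The term $(-\phi)^{-m}$ is negligible, so $F_{2n-2}\sim\frac{1}{\sqrt5}\phi^{2n-2}$, and since $\phi^2=(3+\sqrt5)/2\approx2.618$ we get $F_{2n-2}\approx\frac{1}{\sqrt5}(2.618)^{n-1}$; factoring out to match the stated form $.1708(2.618)^n$ just amounts to $\frac{1}{\sqrt5\,\phi^2}\approx.1708$, a routine numerical check.

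The only subtlety — and it is minor — is confirming that the hypotheses of Theorem~\ref{maincount} are genuinely met by the path ordering from Lemma~\ref{2path}: one must check that for each $i$ the edge serving as the neighborhood of $w_i$ can be taken incident to $w_{i-1}$. This follows because in the path ordering $v_i$ is adjacent to $v_{i-1}$, and $v_i$'s neighborhood is an edge (as $G$ is a $2$-tree) one of whose endpoints is $v_{i-1}$; that edge is incident to $v_{i-1}=w_{i-1}$ in the re-indexed sequence. With $\hH=K_2$ pinned down, everything else is bookkeeping. I do not expect any real obstacle here; the whole content has been front-loaded into Lemma~\ref{2path} and Theorem~\ref{maincount}, and the corollary is essentially their composition evaluated at the base case $\alpha=\beta=1$.
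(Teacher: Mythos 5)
Your proposal is correct and is essentially identical to the paper's own proof: both grow $G$ from $K_2$ (so $\alpha=\beta=1$) via the path ordering from Lemma~\ref{2path} and apply Theorem~\ref{maincount} with $p=n-2$ to get $F_{2n-3}+F_{2n-4}=F_{2n-2}$. Your verification of the hypotheses of Theorem~\ref{maincount} and the numerical check of the asymptotic constant are fine.
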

\begin{proof}
We grow $G$ from the $2$-tree with two vertices by $n-2$ steps that add
simplicial vertices.  Thus we apply Theorem~\ref{maincount} with $p=n-2$
and $\alpha=\beta=1$ to obtain $T(G)=F_{2n-2}$.
\end{proof}

\section{2-Trees with the Most Spanning Trees}

In this section we prove that the $n$-vertex $2$-trees with the most spanning
trees are those having exactly two simplicial vertices.  We do this by proving
that if $G$ is an $n$-vertex $2$-tree with more than two simplicial vertices,
then some $2$-tree with fewer simplicial vertices than $G$ has more spanning
trees than $G$.  We need additional computations based on
Theorem~\ref{maincount}.

\begin{lemma}\label{anyset}
Let $H$ and $J$ be $2$-trees properly containing the edge $xy$.
For every set $S\esub E(J)$ such that $S$ contains no cycle,
$T(H\cup J;S)$ is a strictly increasing function of $T(H;xy)$.
\end{lemma}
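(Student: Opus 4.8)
The plan is to obtain a closed formula for $T(H\cup J;S)$ that is affine in $T(H;xy)$, and then read off the monotonicity. Throughout I take, as the notation $H\cup J$ indicates, that $H$ and $J$ are edge‑disjoint apart from $xy$ and share only the vertices $x$ and $y$; since $H$ and $J$ are $2$‑trees on at least three vertices they are $2$‑connected, so $H-xy$ and $J-xy$ are connected and $\{x,y\}$ is a vertex cut of $G:=H\cup J$. I treat $xy\notin S$ and $xy\in S$ separately, since requiring $xy\in E(T)$ kills the contribution of any spanning tree that ``routes around'' $xy$ inside $H$.

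For $xy\notin S$, I would use deletion--contraction, $T(G;S)=T(G-xy;S)+T(G/xy;S)$. In $G/xy$ the identified vertex is a cut‑vertex separating $H/xy$ from $J/xy$, and spanning‑tree counts are multiplicative over a one‑point union, so $T(G/xy;S)=T(H/xy)\,T(J/xy;S)$; using the standard bijection between spanning trees of $K/xy$ and spanning trees of $K$ containing $xy$ (and the fact that $S$, being acyclic and avoiding $xy$, matches an acyclic edge set of $J/xy$), this is $T(H;xy)\,T(J;S\cup\{xy\})$. For $G-xy$, the crucial structural fact is: a spanning tree $F$ of $G-xy$ restricts to $F_H$ on $H-xy$ and $F_J$ on $J-xy$, and because $\{x,y\}$ is a cut and $F$ is a forest, exactly one of $F_H,F_J$ contains an $x$--$y$ path; in that case it is a spanning tree of its side and the other restriction is a spanning forest of its side whose two components separate $x$ from $y$. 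Counting both cases, and using the bijection ``add $xy$'' between $(x,y)$‑separating spanning $2$‑forests of $K-xy$ and spanning trees of $K$ through $xy$, gives
\[ T(G-xy;S)=T(H-xy)\,T(J;S\cup\{xy\})+T(H;xy)\,T(J-xy;S). \]
Adding this to $T(G/xy;S)$ and applying $T(J;S)=T(J;S\cup\{xy\})+T(J-xy;S)$ (valid because $xy\notin S$) yields
\[ T(H\cup J;S)=T(H-xy)\,T(J;S\cup\{xy\})+T(H;xy)\,T(J;S). \]
In the remaining case $xy\in S$ every counted spanning tree uses $xy$ and hence restricts to a spanning tree through $xy$ on each side, and the same bookkeeping collapses to $T(H\cup J;S)=T(H;xy)\,T(J;S)$.

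To finish: in either case $T(H\cup J;S)$ equals a nonnegative quantity not involving $T(H;xy)$ --- namely $T(H-xy)\,T(J;S\cup\{xy\})$, possibly $0$ --- plus $T(J;S)$ times $T(H;xy)$, where $T(J;S)$ is a positive integer depending only on $J$ and $S$, since the forest $S$ extends to at least one spanning tree of the connected graph $J$. Hence $T(H\cup J;S)$ is affine in $T(H;xy)$ with strictly positive leading coefficient, so it is strictly increasing in $T(H;xy)$, as asserted. The one genuinely substantive step is the structural description of spanning trees of a union of two $2$‑connected graphs meeting in exactly two vertices: verifying that such a spanning tree splits into a spanning tree on one side and an $(x,y)$‑separating spanning $2$‑forest on the other. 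Everything else --- the deletion--contraction split, multiplicativity over a cut‑vertex, the standard contraction bijection, and tracking the constraint set $S$ (including the degenerate case $xy\in S$) --- is routine.
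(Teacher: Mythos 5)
Your route is genuinely different from the paper's. The paper argues by induction on $|V(J)|$, peeling off a simplicial vertex of $J$ and tracking how the four ways of attaching that vertex contribute, so that the monotonicity is propagated through the induction. You instead exploit that $\{x,y\}$ is a $2$-cut of $H\cup J$ to derive the closed formula
$$T(H\cup J;S)=T(H-xy)\,T(J;S\cup\{xy\})+T(H;xy)\,T(J;S)$$
for $xy\notin S$ (and $T(H;xy)\,T(J;S)$ when $xy\in S$). I checked this formula and its derivation (the split of a spanning tree across a $2$-cut into a spanning tree on one side and an $(x,y)$-separating $2$-forest on the other is correct, and for $|V(J)|=3$, $S=\varnothing$ the formula gives $2T(H)+T(H;xy)$, matching the paper's recursion). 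It is arguably more informative than the induction, since it makes the dependence on $H$ completely explicit.

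However, your last step has a real flaw. You declare $T(H-xy)\,T(J;S\cup\{xy\})$ to be ``a nonnegative quantity not involving $T(H;xy)$'' and read off $T(J;S)>0$ as the coefficient of $T(H;xy)$. But in the way this lemma is applied (Theorem~\ref{mainthm}), the graph $H$ is held fixed and only the distinguished edge is changed; the quantity that stays constant is $T(H)$, not $T(H-xy)$. Since $T(H-xy)=T(H)-T(H;xy)$, your ``constant term'' actually decreases as $T(H;xy)$ increases, so the affine decomposition as you wrote it does not establish the monotonicity that is needed. The repair is one line from your own formula: substituting $T(H-xy)=T(H)-T(H;xy)$ gives
$$T(H\cup J;S)=T(H)\,T(J;S\cup\{xy\})+T(H;xy)\,\bigl(T(J;S)-T(J;S\cup\{xy\})\bigr),$$
and the coefficient $T(J;S)-T(J;S\cup\{xy\})=T(J-xy;S)$ is strictly positive because $J-xy$ is connected ($J$ is a $2$-tree on at least three vertices, hence $2$-connected) and the acyclic set $S\subseteq E(J-xy)$ extends to a spanning tree of $J-xy$. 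With that substitution your argument is complete and covers the case $xy\in S$ as well, where the coefficient is $T(J;S)>0$ outright.
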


\begin{proof}
We use induction on $|V(J)|$; note $|V(J)|\ge3$.  Let $v$ be a simplicial
vertex of $J$ not in $\{x,y\}$, and let $J'=J-v$.  Let $N_J(v)=\{w,z\}$,
and let $e=wz$ (see Figure~\ref{figHJ}).  For $S\esub E(J')$, let
$t=T(H\cup J';S)$ and $s=T(H\cup J';S\cup\{e\})$.  Note that $S\cup\{e\}$ may
contain a cycle when $e\notin S$, in which case $s=0$.  Also $S\cup\{e\}=S$
when $e\in S$.  By considering which edges in a spanning tree of $H\cup J$ are 
incident to $v$ (as in earlier arguments), we compute

\medskip
\centerline{
\hbox{
\begin{tabular}{l|l|l}
 &if $e\notin S$&if $e\in S$\\
\hline
$T(H\cup J;S)$&$=2t+s$&$=2s$\\
$T(H\cup J;S\cup\{vw\})$&$=t+s$&$=s$\\
$T(H\cup J;S\cup\{vz\})$&$=t+s$&$=s$\\
$T(H\cup J;S\cup\{vw,vz\})$&$=s$&$=0$
\end{tabular}
}
}

\medskip
\noindent
When $|V(J)|=3$, the value $t$ is independent of $s$ and hence constant as a
function of $s$, but adding $s$ gives the desired behavior.
When $|V(J)|>3$ and the specified set is acyclic, by the induction hypothesis
all summands are strictly increasing in $s$.
\end{proof}

\vspace{-1pc}
\begin{figure}[h]
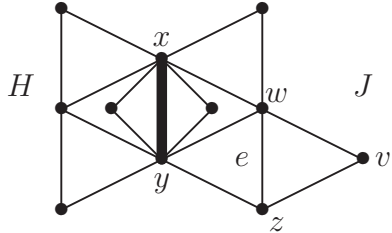

\gpic{
\expandafter\ifx\csname graph\endcsname\relax \csname newbox\endcsname\graph\fi
\expandafter\ifx\csname graphtemp\endcsname\relax \csname newdimen\endcsname\graphtemp\fi
\setbox\graph=\vtop{\vskip 0pt\hbox{%
    \graphtemp=.5ex\advance\graphtemp by 1.118in
    \rlap{\kern 0.316in\lower\graphtemp\hbox to 0pt{\hss $\bu$\hss}}%
    \graphtemp=.5ex\advance\graphtemp by 0.592in
    \rlap{\kern 0.316in\lower\graphtemp\hbox to 0pt{\hss $\bu$\hss}}%
    \graphtemp=.5ex\advance\graphtemp by 0.066in
    \rlap{\kern 0.316in\lower\graphtemp\hbox to 0pt{\hss $\bu$\hss}}%
    \graphtemp=.5ex\advance\graphtemp by 0.592in
    \rlap{\kern 0.579in\lower\graphtemp\hbox to 0pt{\hss $\bu$\hss}}%
    \graphtemp=.5ex\advance\graphtemp by 0.855in
    \rlap{\kern 0.842in\lower\graphtemp\hbox to 0pt{\hss $\bu$\hss}}%
    \graphtemp=.5ex\advance\graphtemp by 0.329in
    \rlap{\kern 0.842in\lower\graphtemp\hbox to 0pt{\hss $\bu$\hss}}%
    \graphtemp=.5ex\advance\graphtemp by 0.592in
    \rlap{\kern 1.105in\lower\graphtemp\hbox to 0pt{\hss $\bu$\hss}}%
    \graphtemp=.5ex\advance\graphtemp by 0.855in
    \rlap{\kern 1.895in\lower\graphtemp\hbox to 0pt{\hss $\bu$\hss}}%
    \special{pn 11}%
    \special{pa 316 1118}%
    \special{pa 316 592}%
    \special{fp}%
    \special{pa 316 592}%
    \special{pa 316 66}%
    \special{fp}%
    \special{pa 316 66}%
    \special{pa 842 329}%
    \special{fp}%
    \special{pa 842 329}%
    \special{pa 316 592}%
    \special{fp}%
    \special{pa 316 592}%
    \special{pa 842 855}%
    \special{fp}%
    \special{pa 842 855}%
    \special{pa 316 1118}%
    \special{fp}%
    \graphtemp=.5ex\advance\graphtemp by 1.118in
    \rlap{\kern 1.368in\lower\graphtemp\hbox to 0pt{\hss $\bu$\hss}}%
    \graphtemp=.5ex\advance\graphtemp by 0.592in
    \rlap{\kern 1.368in\lower\graphtemp\hbox to 0pt{\hss $\bu$\hss}}%
    \graphtemp=.5ex\advance\graphtemp by 0.066in
    \rlap{\kern 1.368in\lower\graphtemp\hbox to 0pt{\hss $\bu$\hss}}%
    \special{pa 579 592}%
    \special{pa 842 855}%
    \special{fp}%
    \special{pa 842 855}%
    \special{pa 1105 592}%
    \special{fp}%
    \special{pa 1105 592}%
    \special{pa 842 329}%
    \special{fp}%
    \special{pa 842 329}%
    \special{pa 579 592}%
    \special{fp}%
    \special{pn 56}%
    \special{pa 842 855}%
    \special{pa 842 329}%
    \special{fp}%
    \special{pn 11}%
    \special{pa 1368 1118}%
    \special{pa 842 855}%
    \special{fp}%
    \special{pa 842 855}%
    \special{pa 1368 592}%
    \special{fp}%
    \special{pa 1368 592}%
    \special{pa 1895 855}%
    \special{fp}%
    \special{pa 1895 855}%
    \special{pa 1368 1118}%
    \special{fp}%
    \special{pa 1368 1118}%
    \special{pa 1368 592}%
    \special{fp}%
    \special{pa 1368 592}%
    \special{pa 1368 66}%
    \special{fp}%
    \special{pa 1368 66}%
    \special{pa 842 329}%
    \special{fp}%
    \special{pa 842 329}%
    \special{pa 1368 592}%
    \special{fp}%
    \graphtemp=.5ex\advance\graphtemp by 0.961in
    \rlap{\kern 0.842in\lower\graphtemp\hbox to 0pt{\hss $y$\hss}}%
    \graphtemp=.5ex\advance\graphtemp by 0.224in
    \rlap{\kern 0.842in\lower\graphtemp\hbox to 0pt{\hss $x$\hss}}%
    \graphtemp=.5ex\advance\graphtemp by 0.855in
    \rlap{\kern 2.000in\lower\graphtemp\hbox to 0pt{\hss $v$\hss}}%
    \graphtemp=.5ex\advance\graphtemp by 1.193in
    \rlap{\kern 1.443in\lower\graphtemp\hbox to 0pt{\hss $z$\hss}}%
    \graphtemp=.5ex\advance\graphtemp by 0.518in
    \rlap{\kern 1.443in\lower\graphtemp\hbox to 0pt{\hss $w$\hss}}%
    \graphtemp=.5ex\advance\graphtemp by 0.855in
    \rlap{\kern 1.263in\lower\graphtemp\hbox to 0pt{\hss $e$\hss}}%
    \graphtemp=.5ex\advance\graphtemp by 0.487in
    \rlap{\kern 0.105in\lower\graphtemp\hbox to 0pt{\hss $H$\hss}}%
    \graphtemp=.5ex\advance\graphtemp by 0.487in
    \rlap{\kern 1.895in\lower\graphtemp\hbox to 0pt{\hss $J$\hss}}%
    \hbox{\vrule depth1.224in width0pt height 0pt}%
    \kern 2.000in
  }%
}%
}

\vspace{-1pc} 
\caption{$2$-trees $H$ and $J$ sharing an edge $xy$ in
Lemma~\ref{anyset}.\label{figHJ}}
\end{figure}

\begin{lemma}\label{pathineq}
Let $e_0$ be an edge in a $2$-tree $\hH$ with at least three vertices, and let
$\alpha=T(\hH)$ and $\beta=T(\hH;e_0)$.  For $p\ge1$, let $G_p$ be a $2$-tree
grown from $\hH$ by successively adding simplicial vertices $\VEC w1p$ such
that the neighbors of $w_i$ are the endpoints of an edge $e_{i-1}$ incident to
$w_{i-1}$ ($w_1$ is adjacent to the endpoints of $e_0$).  Let $e'$ be the edge
of $G_1$ other than $e_1$ that is incident to $w_1$ and $e_0$.  The numbers of 
spanning trees containing $e_0$ and $e'$ satisfy
$$
T(G_p;e_0)=F_{2p+1}\beta\qquad
{\textrm and}\qquad T(G_p;e')=F_{2p-1}\alpha+F_{2p}\beta
$$
In addition, $T(G_p;e_p)>T(G_p;e_0)$ and $T(G_p;e_p)>T(G_p;e')$.
\end{lemma}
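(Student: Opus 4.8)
The plan is to obtain both equalities by running, for \emph{constrained} spanning-tree counts, the same $2\times2$ linear recursion used in the proof of Theorem~\ref{maincount}, and then to read off the two inequalities by a one-line subtraction. Write $e_0=ab$ and, without loss of generality, $e_1=w_1a$, so that $e'=w_1b$.

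First I would record a general principle that covers both $f=e_0$ and $f=e'$ at once. Each of these edges lies in $G_1$, and (as is straightforward to check) is distinct from every advancing edge $e_i$ with $i\ge1$; in particular $f$ is never the edge across which the next simplicial vertex is added, nor one of the two edges that vertex creates. Set $P_p=T(G_p;f)$ and $Q_p=T(G_p;\{f,e_p\})$. Adding $w_{p+1}$ (whose neighbours are the endpoints of $e_p$) and splitting a spanning tree of $G_{p+1}$ according to whether $w_{p+1}$ is a leaf at one endpoint of $e_p$, a leaf at the other, or has degree $2$ — exactly as in Observation~\ref{obs1}, and using that $f$ is untouched by this step — gives for all $p\ge1$
$$P_{p+1}=2P_p+Q_p,\qquad Q_{p+1}=P_p+Q_p,$$
which is precisely the recursion satisfied by $(t_p,s_p)$. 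Hence the same computation as in Theorem~\ref{maincount}, but with the base taken at $p=1$, gives $P_p=F_{2p-1}P_1+F_{2p-2}Q_1$ for $p\ge1$; only the initial data $(P_1,Q_1)$ differ.

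Next I would compute those data by the same case analysis applied to $w_1$ in $G_1$, whose neighbours are $a$ and $b$: a spanning tree of $G_1$ has $w_1$ as a leaf at $a$ or at $b$ (deleting $w_1$ then leaves a spanning tree of $\hH$), or has $w_1$ of degree $2$ (which removes $e_0$ and contracts to a spanning tree of $\hH$ containing $e_0$). For $f=e_0$ this gives $P_1=2\beta$ and $Q_1=\beta$ (the degree-$2$ case cannot help to keep $e_0$), so $P_p=(2F_{2p-1}+F_{2p-2})\beta=F_{2p+1}\beta=T(G_p;e_0)$. For $f=e'=w_1b$ it gives $P_1=\alpha+\beta$ (the leaf at $b$ contributes $\alpha$, the degree-$2$ case contributes $\beta$) and $Q_1=T(G_1;\{w_1a,w_1b\})=\beta$, so $P_p=F_{2p-1}\alpha+(F_{2p-1}+F_{2p-2})\beta=F_{2p-1}\alpha+F_{2p}\beta=T(G_p;e')$. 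These are the two claimed identities.

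For the inequalities I would combine $T(G_p;e_p)=F_{2p}\alpha+F_{2p-1}\beta$ from Theorem~\ref{maincount} with $F_{2p+1}=F_{2p}+F_{2p-1}$ and $F_{2p}=F_{2p-1}+F_{2p-2}$, obtaining $T(G_p;e_p)-T(G_p;e_0)=F_{2p}(\alpha-\beta)$ and $T(G_p;e_p)-T(G_p;e')=F_{2p-2}(\alpha-\beta)$. Here $\alpha>\beta$ because $\hH$ has at least three vertices, hence is $2$-connected, so $e_0$ is not a bridge and some spanning tree of $\hH$ omits it. Thus the first inequality holds for every $p\ge1$; the second holds for $p\ge2$, while at $p=1$ one has instead the equality $T(G_1;e_1)=T(G_1;e')$ by symmetry (so the stated strict inequality should be read for $p\ge2$, which is the range used later). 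The step demanding the most care is the first one: confirming that conditioning on containing $f$ genuinely commutes with a single enumeration step — i.e.\ that $f$ is neither the advancing edge nor a newly created edge at any stage — and then keeping the Fibonacci indices straight when the recursion is solved from the data at $p=1$ instead of $p=0$.
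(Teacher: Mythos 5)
Your proof is correct, and it takes a genuinely different route from the paper's. The paper derives the two identities by two separate structural decompositions: for $T(G_p;e_0)$ it uses the fact that the endpoints of $e_0$ form a separating $2$-set, so each spanning tree through $e_0$ factors as a spanning tree of $\hH$ through $e_0$ times a spanning tree of the grown part through $e_0$, the latter count being read off from Theorem~\ref{maincount} with $\alpha=\beta=1$ after reversing the growth order; for $T(G_p;e')$ it counts the complement, using that deleting $e'$ creates a cut-vertex and hence a product over two blocks, one of which is handled by Corollary~\ref{2simp}. Your transfer-matrix argument --- running the pair $\bigl(T(G_p;f),\,T(G_p;\{f,e_p\})\bigr)$ through the same recursion as $(t_p,s_p)$ and changing only the initial data at $p=1$ --- treats $e_0$ and $e'$ uniformly and needs no new decomposition; the one point that must be checked, that $f$ is never the advancing edge $e_i$ nor a newly created edge at any stage, is exactly the point you flag, and it does check out (for $i\ge 2$ these edges meet $w_i$ or $w_{i+1}$, which are not endpoints of $e_0$ or $e'$, and $e_1\ne e_0,e'$ by definition). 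Both routes are valid; the paper's is shorter given the machinery already in place, while yours is more systematic and extends verbatim to any fixed acyclic edge set of $G_1$ avoiding the $e_i$. Your initial data and the resulting formulas $F_{2p+1}\beta$ and $F_{2p-1}\alpha+F_{2p}\beta$ agree with the paper's.

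You are also right to qualify the second inequality: it is not strict at $p=1$, since $T(G_1;e_1)=T(G_1;e')=\alpha+\beta$ by the symmetry of the two edges at $w_1$. The paper's proof obscures this by miscopying Theorem~\ref{maincount}: it asserts $T(G_p;e_p)=F_{2p+1}\alpha+F_{2p}\beta$, which is the formula for $T(G_p)$, whereas $T(G_p;e_p)=s_p=F_{2p}\alpha+F_{2p-1}\beta$. With the correct value the two differences are $F_{2p}(\alpha-\beta)$ and $F_{2p-2}(\alpha-\beta)$, exactly as you computed, and your justification that $\alpha>\beta$ (a $2$-tree on at least three vertices is $2$-connected, so $e_0$ is not a bridge) is fine. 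So the lemma's claim about $e'$ should be restricted to $p\ge2$ or weakened to $\ge$ at $p=1$; as you note, the application in Theorem~\ref{mainthm} only invokes the $e'$ case with $p\ge2$, so nothing downstream breaks.
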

\begin{proof}
See Figure~\ref{figGp}.
By Theorem~\ref{maincount}, $T(G_p;e_p)=F_{2p+1}\alpha+F_{2p}\beta$.
Hence the final statement follows immediately from the claimed formulas
for $T(G_p;e_0)$ and $T(G_p;e')$.

For $T(G_p;e_0)$, note that the vertices of $e_0$ form a separating $2$-set in
$G_p$.  Let $G'$ be the
subgraph of $G_p$ induced by the endpoints of $e_0$ and the vertices
$\VEC w1p$.  Every spanning tree in $G_p$ containing $e_0$ is the union of two
trees containing $e_0$: a spanning tree in $\hH$ and a spanning tree in $G'$. 
Note that $G'$ is a $2$-tree with two simplicial vertices and can be grown in
reverse order by switching the roles of $e_0$ and $e_p$.  Thus $T(G';e_0)$ is
the value of $s_p$ from Theorem~\ref{maincount} when starting with
$\alpha=\beta=1$, since $G'$ is grown from a $2$-tree consisting of one edge.
Multiplying by $T(G_0;e_0)$ yields
$T(G_p;e_0)=(F_{2p}+F_{2p-1})\beta=F_{2p+1}\beta$.

For $T(G_p;e')$, we subtract from $T(G_p)$ the number $\tau$ of spanning trees
of $G_p$ that do not contain $e'$.  Deleting $e'$ leaves the common vertex of
$e_0$ and $e_1$ as a cut-vertex.  Hence $\tau$ is the product of the numbers of
spanning trees in the two blocks of $G_p-e'$.  One block is just $G_0$.  The
other is a $2$-tree with $p+1$ vertices having exactly two simplicial vertices.
By Corollary~\ref{2simp}, it has $F_{2p}$ spanning trees.
By Theorem~\ref{maincount}, $T(G_p)=F_{2p+1}\alpha+F_{2p}\beta$.
Altogether, we have
$T(G_p;e')=F_{2p+1}\alpha+F_{2p}\beta-F_{2p}\alpha=F_{2p-1}\alpha+F_{2p}\beta.$
\end{proof}

\vspace{-1pc}
\begin{figure}[h]
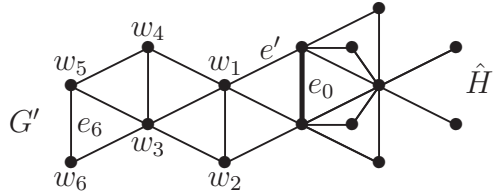

\gpic{
\expandafter\ifx\csname graph\endcsname\relax \csname newbox\endcsname\graph\fi
\expandafter\ifx\csname graphtemp\endcsname\relax \csname newdimen\endcsname\graphtemp\fi
\setbox\graph=\vtop{\vskip 0pt\hbox{%
    \graphtemp=.5ex\advance\graphtemp by 0.856in
    \rlap{\kern 0.403in\lower\graphtemp\hbox to 0pt{\hss $\bu$\hss}}%
    \graphtemp=.5ex\advance\graphtemp by 0.453in
    \rlap{\kern 0.403in\lower\graphtemp\hbox to 0pt{\hss $\bu$\hss}}%
    \graphtemp=.5ex\advance\graphtemp by 0.655in
    \rlap{\kern 0.806in\lower\graphtemp\hbox to 0pt{\hss $\bu$\hss}}%
    \graphtemp=.5ex\advance\graphtemp by 0.252in
    \rlap{\kern 0.806in\lower\graphtemp\hbox to 0pt{\hss $\bu$\hss}}%
    \graphtemp=.5ex\advance\graphtemp by 0.856in
    \rlap{\kern 1.209in\lower\graphtemp\hbox to 0pt{\hss $\bu$\hss}}%
    \graphtemp=.5ex\advance\graphtemp by 0.453in
    \rlap{\kern 1.209in\lower\graphtemp\hbox to 0pt{\hss $\bu$\hss}}%
    \graphtemp=.5ex\advance\graphtemp by 0.655in
    \rlap{\kern 1.612in\lower\graphtemp\hbox to 0pt{\hss $\bu$\hss}}%
    \graphtemp=.5ex\advance\graphtemp by 0.252in
    \rlap{\kern 1.612in\lower\graphtemp\hbox to 0pt{\hss $\bu$\hss}}%
    \graphtemp=.5ex\advance\graphtemp by 0.252in
    \rlap{\kern 2.418in\lower\graphtemp\hbox to 0pt{\hss $\bu$\hss}}%
    \special{pn 11}%
    \special{pa 403 856}%
    \special{pa 403 453}%
    \special{fp}%
    \special{pa 403 453}%
    \special{pa 806 252}%
    \special{fp}%
    \special{pa 806 252}%
    \special{pa 806 655}%
    \special{fp}%
    \special{pa 806 655}%
    \special{pa 403 856}%
    \special{fp}%
    \special{pa 403 453}%
    \special{pa 806 655}%
    \special{fp}%
    \special{pa 806 655}%
    \special{pa 1209 856}%
    \special{fp}%
    \special{pa 1209 856}%
    \special{pa 1612 655}%
    \special{fp}%
    \special{pa 1209 453}%
    \special{pa 806 252}%
    \special{fp}%
    \special{pa 1209 453}%
    \special{pa 806 655}%
    \special{fp}%
    \special{pa 1209 453}%
    \special{pa 1209 856}%
    \special{fp}%
    \special{pa 1209 453}%
    \special{pa 1612 655}%
    \special{fp}%
    \special{pa 1209 453}%
    \special{pa 1612 252}%
    \special{fp}%
    \special{pn 28}%
    \special{pa 1612 655}%
    \special{pa 1612 252}%
    \special{fp}%
    \graphtemp=.5ex\advance\graphtemp by 0.937in
    \rlap{\kern 0.403in\lower\graphtemp\hbox to 0pt{\hss $w_6$\hss}}%
    \graphtemp=.5ex\advance\graphtemp by 0.332in
    \rlap{\kern 0.403in\lower\graphtemp\hbox to 0pt{\hss $w_5$\hss}}%
    \graphtemp=.5ex\advance\graphtemp by 0.756in
    \rlap{\kern 0.806in\lower\graphtemp\hbox to 0pt{\hss $w_3$\hss}}%
    \graphtemp=.5ex\advance\graphtemp by 0.151in
    \rlap{\kern 0.806in\lower\graphtemp\hbox to 0pt{\hss $w_4$\hss}}%
    \graphtemp=.5ex\advance\graphtemp by 0.937in
    \rlap{\kern 1.209in\lower\graphtemp\hbox to 0pt{\hss $w_2$\hss}}%
    \graphtemp=.5ex\advance\graphtemp by 0.332in
    \rlap{\kern 1.209in\lower\graphtemp\hbox to 0pt{\hss $w_1$\hss}}%
    \graphtemp=.5ex\advance\graphtemp by 0.275in
    \rlap{\kern 1.447in\lower\graphtemp\hbox to 0pt{\hss $e'$\hss}}%
    \graphtemp=.5ex\advance\graphtemp by 0.453in
    \rlap{\kern 1.713in\lower\graphtemp\hbox to 0pt{\hss $e_0$\hss}}%
    \graphtemp=.5ex\advance\graphtemp by 0.655in
    \rlap{\kern 0.504in\lower\graphtemp\hbox to 0pt{\hss $e_6$\hss}}%
    \graphtemp=.5ex\advance\graphtemp by 0.453in
    \rlap{\kern 2.015in\lower\graphtemp\hbox to 0pt{\hss $\bu$\hss}}%
    \graphtemp=.5ex\advance\graphtemp by 0.050in
    \rlap{\kern 2.015in\lower\graphtemp\hbox to 0pt{\hss $\bu$\hss}}%
    \graphtemp=.5ex\advance\graphtemp by 0.252in
    \rlap{\kern 1.874in\lower\graphtemp\hbox to 0pt{\hss $\bu$\hss}}%
    \graphtemp=.5ex\advance\graphtemp by 0.655in
    \rlap{\kern 1.874in\lower\graphtemp\hbox to 0pt{\hss $\bu$\hss}}%
    \graphtemp=.5ex\advance\graphtemp by 0.856in
    \rlap{\kern 2.015in\lower\graphtemp\hbox to 0pt{\hss $\bu$\hss}}%
    \graphtemp=.5ex\advance\graphtemp by 0.655in
    \rlap{\kern 2.418in\lower\graphtemp\hbox to 0pt{\hss $\bu$\hss}}%
    \special{pn 11}%
    \special{pa 2015 453}%
    \special{pa 2015 50}%
    \special{fp}%
    \special{pa 2015 453}%
    \special{pa 1874 252}%
    \special{fp}%
    \special{pa 2015 453}%
    \special{pa 1612 252}%
    \special{fp}%
    \special{pa 2015 453}%
    \special{pa 1612 655}%
    \special{fp}%
    \special{pa 2015 453}%
    \special{pa 1874 655}%
    \special{fp}%
    \special{pa 2015 453}%
    \special{pa 2015 856}%
    \special{fp}%
    \special{pa 2015 453}%
    \special{pa 2418 655}%
    \special{fp}%
    \special{pa 2015 453}%
    \special{pa 2418 252}%
    \special{fp}%
    \special{pa 2015 50}%
    \special{pa 1612 252}%
    \special{fp}%
    \special{pa 1612 252}%
    \special{pa 1874 252}%
    \special{fp}%
    \special{pa 1874 252}%
    \special{pa 2015 453}%
    \special{fp}%
    \special{pa 2015 453}%
    \special{pa 1874 655}%
    \special{fp}%
    \special{pa 1874 655}%
    \special{pa 1612 655}%
    \special{fp}%
    \special{pa 1612 655}%
    \special{pa 2015 856}%
    \special{fp}%
    \special{pa 2015 856}%
    \special{pa 1612 655}%
    \special{fp}%
    \special{pa 1612 655}%
    \special{pa 2418 252}%
    \special{fp}%
    \graphtemp=.5ex\advance\graphtemp by 0.655in
    \rlap{\kern 0.161in\lower\graphtemp\hbox to 0pt{\hss $G'$\hss}}%
    \graphtemp=.5ex\advance\graphtemp by 0.453in
    \rlap{\kern 2.539in\lower\graphtemp\hbox to 0pt{\hss $\hH$\hss}}%
    \hbox{\vrule depth0.937in width0pt height 0pt}%
    \kern 2.700in
  }%
}%
}

\vspace{-1pc} 
\caption{Growing $G_6$ from $\hH$ containing $e_0$ in 
Lemma~\ref{pathineq}.\label{figGp}}
\end{figure}

For an $n$-vertex $2$-tree $G$ with more than two simplicial vertices, an
appropriate decomposition of $G$ will allow us to invoke these two lemmas to
obtain an $n$-vertex $2$-tree with more spanning trees.

\begin{theorem}\label{mainthm}
If $n\ge4$ and $G$ is an $n$-vertex $2$-tree with more than two simplicial
vertices, then $G$ has fewer spanning trees than the $n$-vertex $2$-trees with
two simplicial vertices.
\end{theorem}
\begin{proof}
There is only one $4$-vertex $2$-tree, and it has two simplicial vertices.
For $G$ as specified, we obtain an $n$-vertex $2$-tree $G'$ with more spanning
trees.  We do this by expressing $G$ as $H\cup J$ with one common edge to apply
Lemma~\ref{anyset}, and we will apply Lemma~\ref{pathineq} after forming $G'$
by attaching $J$ at an edge belonging to more spanning trees than $e$ in $H$.

Let $v$ and $v'$ be two simplicial vertices in $G$; they are not adjacent.
While there remain other simplicial vertices, iteratively delete a simplicial
vertex not in $\{v,v'\}$.  This begins a $2$-simplicial ordering of $G$.  When
the process cannot continue further, what remains is a $2$-tree $H'$ contained
in $G$; the only simplicial vertices of $H'$ are $v$ and $v'$.  See
Figure~\ref{figmain}.

\eject
\begin{figure}[h!]
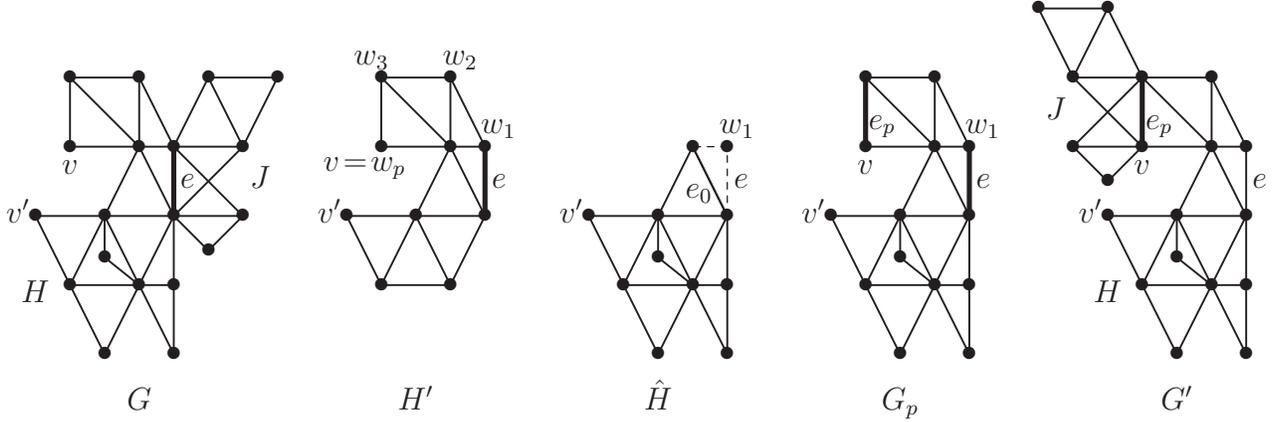

\gpic{
\expandafter\ifx\csname graph\endcsname\relax \csname newbox\endcsname\graph\fi
\expandafter\ifx\csname graphtemp\endcsname\relax \csname newdimen\endcsname\graphtemp\fi
\setbox\graph=\vtop{\vskip 0pt\hbox{%
    \graphtemp=.5ex\advance\graphtemp by 1.132in
    \rlap{\kern 0.091in\lower\graphtemp\hbox to 0pt{\hss $\bu$\hss}}%
    \graphtemp=.5ex\advance\graphtemp by 1.494in
    \rlap{\kern 0.272in\lower\graphtemp\hbox to 0pt{\hss $\bu$\hss}}%
    \graphtemp=.5ex\advance\graphtemp by 1.132in
    \rlap{\kern 0.453in\lower\graphtemp\hbox to 0pt{\hss $\bu$\hss}}%
    \graphtemp=.5ex\advance\graphtemp by 1.494in
    \rlap{\kern 0.634in\lower\graphtemp\hbox to 0pt{\hss $\bu$\hss}}%
    \graphtemp=.5ex\advance\graphtemp by 1.132in
    \rlap{\kern 0.815in\lower\graphtemp\hbox to 0pt{\hss $\bu$\hss}}%
    \graphtemp=.5ex\advance\graphtemp by 0.769in
    \rlap{\kern 0.815in\lower\graphtemp\hbox to 0pt{\hss $\bu$\hss}}%
    \graphtemp=.5ex\advance\graphtemp by 0.769in
    \rlap{\kern 0.634in\lower\graphtemp\hbox to 0pt{\hss $\bu$\hss}}%
    \graphtemp=.5ex\advance\graphtemp by 0.407in
    \rlap{\kern 0.634in\lower\graphtemp\hbox to 0pt{\hss $\bu$\hss}}%
    \graphtemp=.5ex\advance\graphtemp by 0.407in
    \rlap{\kern 0.272in\lower\graphtemp\hbox to 0pt{\hss $\bu$\hss}}%
    \graphtemp=.5ex\advance\graphtemp by 0.769in
    \rlap{\kern 0.272in\lower\graphtemp\hbox to 0pt{\hss $\bu$\hss}}%
    \special{pn 11}%
    \special{pa 91 1132}%
    \special{pa 272 1494}%
    \special{fp}%
    \special{pa 272 1494}%
    \special{pa 634 1494}%
    \special{fp}%
    \special{pa 634 1494}%
    \special{pa 815 1132}%
    \special{fp}%
    \special{pa 815 1132}%
    \special{pa 815 769}%
    \special{fp}%
    \special{pa 815 769}%
    \special{pa 634 407}%
    \special{fp}%
    \special{pa 634 407}%
    \special{pa 272 407}%
    \special{fp}%
    \special{pa 272 407}%
    \special{pa 634 769}%
    \special{fp}%
    \special{pa 634 769}%
    \special{pa 453 1132}%
    \special{fp}%
    \special{pa 453 1132}%
    \special{pa 91 1132}%
    \special{fp}%
    \special{pa 453 1132}%
    \special{pa 272 1494}%
    \special{fp}%
    \special{pa 453 1132}%
    \special{pa 634 1494}%
    \special{fp}%
    \special{pa 453 1132}%
    \special{pa 815 1132}%
    \special{fp}%
    \special{pa 272 407}%
    \special{pa 272 769}%
    \special{fp}%
    \special{pa 272 769}%
    \special{pa 634 769}%
    \special{fp}%
    \special{pa 634 769}%
    \special{pa 815 769}%
    \special{fp}%
    \special{pa 634 407}%
    \special{pa 634 769}%
    \special{fp}%
    \special{pa 634 769}%
    \special{pa 815 1132}%
    \special{fp}%
    \graphtemp=.5ex\advance\graphtemp by 1.132in
    \rlap{\kern 0.000in\lower\graphtemp\hbox to 0pt{\hss $v'$\hss}}%
    \graphtemp=.5ex\advance\graphtemp by 0.878in
    \rlap{\kern 0.272in\lower\graphtemp\hbox to 0pt{\hss $v$\hss}}%
    \graphtemp=.5ex\advance\graphtemp by 0.951in
    \rlap{\kern 0.887in\lower\graphtemp\hbox to 0pt{\hss $e$\hss}}%
    \special{pn 28}%
    \special{pa 815 1132}%
    \special{pa 815 769}%
    \special{fp}%
    \graphtemp=.5ex\advance\graphtemp by 1.313in
    \rlap{\kern 0.996in\lower\graphtemp\hbox to 0pt{\hss $\bu$\hss}}%
    \graphtemp=.5ex\advance\graphtemp by 1.132in
    \rlap{\kern 1.177in\lower\graphtemp\hbox to 0pt{\hss $\bu$\hss}}%
    \graphtemp=.5ex\advance\graphtemp by 0.769in
    \rlap{\kern 1.177in\lower\graphtemp\hbox to 0pt{\hss $\bu$\hss}}%
    \graphtemp=.5ex\advance\graphtemp by 0.407in
    \rlap{\kern 0.996in\lower\graphtemp\hbox to 0pt{\hss $\bu$\hss}}%
    \graphtemp=.5ex\advance\graphtemp by 0.407in
    \rlap{\kern 1.358in\lower\graphtemp\hbox to 0pt{\hss $\bu$\hss}}%
    \special{pn 11}%
    \special{pa 815 1132}%
    \special{pa 996 1313}%
    \special{fp}%
    \special{pa 996 1313}%
    \special{pa 1177 1132}%
    \special{fp}%
    \special{pa 1177 1132}%
    \special{pa 815 1132}%
    \special{fp}%
    \special{pa 815 1132}%
    \special{pa 1177 769}%
    \special{fp}%
    \special{pa 1177 769}%
    \special{pa 1358 407}%
    \special{fp}%
    \special{pa 1358 407}%
    \special{pa 996 407}%
    \special{fp}%
    \special{pa 996 407}%
    \special{pa 815 769}%
    \special{fp}%
    \special{pa 815 769}%
    \special{pa 1177 769}%
    \special{fp}%
    \special{pa 1177 1132}%
    \special{pa 815 769}%
    \special{fp}%
    \special{pa 1177 769}%
    \special{pa 996 407}%
    \special{fp}%
    \graphtemp=.5ex\advance\graphtemp by 1.349in
    \rlap{\kern 0.453in\lower\graphtemp\hbox to 0pt{\hss $\bu$\hss}}%
    \graphtemp=.5ex\advance\graphtemp by 1.494in
    \rlap{\kern 0.815in\lower\graphtemp\hbox to 0pt{\hss $\bu$\hss}}%
    \graphtemp=.5ex\advance\graphtemp by 1.856in
    \rlap{\kern 0.815in\lower\graphtemp\hbox to 0pt{\hss $\bu$\hss}}%
    \graphtemp=.5ex\advance\graphtemp by 1.856in
    \rlap{\kern 0.453in\lower\graphtemp\hbox to 0pt{\hss $\bu$\hss}}%
    \special{pa 453 1132}%
    \special{pa 453 1349}%
    \special{fp}%
    \special{pa 453 1349}%
    \special{pa 634 1494}%
    \special{fp}%
    \special{pa 634 1494}%
    \special{pa 815 1494}%
    \special{fp}%
    \special{pa 815 1494}%
    \special{pa 815 1856}%
    \special{fp}%
    \special{pa 815 1856}%
    \special{pa 634 1494}%
    \special{fp}%
    \special{pa 634 1494}%
    \special{pa 453 1856}%
    \special{fp}%
    \special{pa 453 1856}%
    \special{pa 272 1494}%
    \special{fp}%
    \special{pa 815 1494}%
    \special{pa 815 1132}%
    \special{fp}%
    \graphtemp=.5ex\advance\graphtemp by 1.548in
    \rlap{\kern 0.091in\lower\graphtemp\hbox to 0pt{\hss $H$\hss}}%
    \graphtemp=.5ex\advance\graphtemp by 0.951in
    \rlap{\kern 1.267in\lower\graphtemp\hbox to 0pt{\hss $J$\hss}}%
    \graphtemp=.5ex\advance\graphtemp by 2.109in
    \rlap{\kern 0.634in\lower\graphtemp\hbox to 0pt{\hss $G$\hss}}%
    \graphtemp=.5ex\advance\graphtemp by 1.132in
    \rlap{\kern 1.720in\lower\graphtemp\hbox to 0pt{\hss $\bu$\hss}}%
    \graphtemp=.5ex\advance\graphtemp by 1.494in
    \rlap{\kern 1.901in\lower\graphtemp\hbox to 0pt{\hss $\bu$\hss}}%
    \graphtemp=.5ex\advance\graphtemp by 1.132in
    \rlap{\kern 2.082in\lower\graphtemp\hbox to 0pt{\hss $\bu$\hss}}%
    \graphtemp=.5ex\advance\graphtemp by 1.494in
    \rlap{\kern 2.263in\lower\graphtemp\hbox to 0pt{\hss $\bu$\hss}}%
    \graphtemp=.5ex\advance\graphtemp by 1.132in
    \rlap{\kern 2.444in\lower\graphtemp\hbox to 0pt{\hss $\bu$\hss}}%
    \graphtemp=.5ex\advance\graphtemp by 0.769in
    \rlap{\kern 2.444in\lower\graphtemp\hbox to 0pt{\hss $\bu$\hss}}%
    \graphtemp=.5ex\advance\graphtemp by 0.769in
    \rlap{\kern 2.263in\lower\graphtemp\hbox to 0pt{\hss $\bu$\hss}}%
    \graphtemp=.5ex\advance\graphtemp by 0.407in
    \rlap{\kern 2.263in\lower\graphtemp\hbox to 0pt{\hss $\bu$\hss}}%
    \graphtemp=.5ex\advance\graphtemp by 0.407in
    \rlap{\kern 1.901in\lower\graphtemp\hbox to 0pt{\hss $\bu$\hss}}%
    \graphtemp=.5ex\advance\graphtemp by 0.769in
    \rlap{\kern 1.901in\lower\graphtemp\hbox to 0pt{\hss $\bu$\hss}}%
    \graphtemp=.5ex\advance\graphtemp by 0.302in
    \rlap{\kern 1.850in\lower\graphtemp\hbox to 0pt{\hss $w_3$\hss}}%
    \graphtemp=.5ex\advance\graphtemp by 0.302in
    \rlap{\kern 2.314in\lower\graphtemp\hbox to 0pt{\hss $w_2$\hss}}%
    \graphtemp=.5ex\advance\graphtemp by 0.664in
    \rlap{\kern 2.514in\lower\graphtemp\hbox to 0pt{\hss $w_1$\hss}}%
    \special{pa 1720 1132}%
    \special{pa 1901 1494}%
    \special{fp}%
    \special{pa 1901 1494}%
    \special{pa 2263 1494}%
    \special{fp}%
    \special{pa 2263 1494}%
    \special{pa 2444 1132}%
    \special{fp}%
    \special{pa 2444 1132}%
    \special{pa 2444 769}%
    \special{fp}%
    \special{pa 2444 769}%
    \special{pa 2263 407}%
    \special{fp}%
    \special{pa 2263 407}%
    \special{pa 1901 407}%
    \special{fp}%
    \special{pa 1901 407}%
    \special{pa 2263 769}%
    \special{fp}%
    \special{pa 2263 769}%
    \special{pa 2082 1132}%
    \special{fp}%
    \special{pa 2082 1132}%
    \special{pa 1720 1132}%
    \special{fp}%
    \special{pa 2082 1132}%
    \special{pa 1901 1494}%
    \special{fp}%
    \special{pa 2082 1132}%
    \special{pa 2263 1494}%
    \special{fp}%
    \special{pa 2082 1132}%
    \special{pa 2444 1132}%
    \special{fp}%
    \special{pa 1901 407}%
    \special{pa 1901 769}%
    \special{fp}%
    \special{pa 1901 769}%
    \special{pa 2263 769}%
    \special{fp}%
    \special{pa 2263 769}%
    \special{pa 2444 769}%
    \special{fp}%
    \special{pa 2263 407}%
    \special{pa 2263 769}%
    \special{fp}%
    \special{pa 2263 769}%
    \special{pa 2444 1132}%
    \special{fp}%
    \graphtemp=.5ex\advance\graphtemp by 1.132in
    \rlap{\kern 1.630in\lower\graphtemp\hbox to 0pt{\hss $v'$\hss}}%
    \graphtemp=.5ex\advance\graphtemp by 0.857in
    \rlap{\kern 1.814in\lower\graphtemp\hbox to 0pt{\hss $v\!=\!w_p$\hss}}%
    \graphtemp=.5ex\advance\graphtemp by 0.951in
    \rlap{\kern 2.517in\lower\graphtemp\hbox to 0pt{\hss $e$\hss}}%
    \special{pn 28}%
    \special{pa 2444 1132}%
    \special{pa 2444 769}%
    \special{fp}%
    \graphtemp=.5ex\advance\graphtemp by 2.109in
    \rlap{\kern 2.082in\lower\graphtemp\hbox to 0pt{\hss $H'$\hss}}%
    \graphtemp=.5ex\advance\graphtemp by 1.132in
    \rlap{\kern 2.987in\lower\graphtemp\hbox to 0pt{\hss $\bu$\hss}}%
    \graphtemp=.5ex\advance\graphtemp by 1.494in
    \rlap{\kern 3.169in\lower\graphtemp\hbox to 0pt{\hss $\bu$\hss}}%
    \graphtemp=.5ex\advance\graphtemp by 1.132in
    \rlap{\kern 3.350in\lower\graphtemp\hbox to 0pt{\hss $\bu$\hss}}%
    \graphtemp=.5ex\advance\graphtemp by 1.494in
    \rlap{\kern 3.531in\lower\graphtemp\hbox to 0pt{\hss $\bu$\hss}}%
    \graphtemp=.5ex\advance\graphtemp by 1.132in
    \rlap{\kern 3.712in\lower\graphtemp\hbox to 0pt{\hss $\bu$\hss}}%
    \graphtemp=.5ex\advance\graphtemp by 0.769in
    \rlap{\kern 3.712in\lower\graphtemp\hbox to 0pt{\hss $\bu$\hss}}%
    \graphtemp=.5ex\advance\graphtemp by 0.769in
    \rlap{\kern 3.531in\lower\graphtemp\hbox to 0pt{\hss $\bu$\hss}}%
    \special{pn 11}%
    \special{pa 2987 1132}%
    \special{pa 3169 1494}%
    \special{fp}%
    \special{pa 3169 1494}%
    \special{pa 3531 1494}%
    \special{fp}%
    \special{pa 3531 1494}%
    \special{pa 3712 1132}%
    \special{fp}%
    \special{pa 3712 1132}%
    \special{pa 3531 769}%
    \special{fp}%
    \special{pa 3531 769}%
    \special{pa 3350 1132}%
    \special{fp}%
    \special{pa 3350 1132}%
    \special{pa 2987 1132}%
    \special{fp}%
    \special{pa 3350 1132}%
    \special{pa 3169 1494}%
    \special{fp}%
    \special{pa 3350 1132}%
    \special{pa 3531 1494}%
    \special{fp}%
    \special{pa 3350 1132}%
    \special{pa 3712 1132}%
    \special{fp}%
    \special{pa 3531 769}%
    \special{pa 3712 1132}%
    \special{fp}%
    \special{pn 8}%
    \special{pa 3712 1132}%
    \special{pa 3712 769}%
    \special{pa 3531 769}%
    \special{da 0.036}%
    \graphtemp=.5ex\advance\graphtemp by 1.132in
    \rlap{\kern 2.897in\lower\graphtemp\hbox to 0pt{\hss $v'$\hss}}%
    \graphtemp=.5ex\advance\graphtemp by 0.878in
    \rlap{\kern 1.901in\lower\graphtemp\hbox to 0pt{\hss $~$\hss}}%
    \graphtemp=.5ex\advance\graphtemp by 0.951in
    \rlap{\kern 3.784in\lower\graphtemp\hbox to 0pt{\hss $e$\hss}}%
    \graphtemp=.5ex\advance\graphtemp by 1.349in
    \rlap{\kern 3.350in\lower\graphtemp\hbox to 0pt{\hss $\bu$\hss}}%
    \graphtemp=.5ex\advance\graphtemp by 1.494in
    \rlap{\kern 3.712in\lower\graphtemp\hbox to 0pt{\hss $\bu$\hss}}%
    \graphtemp=.5ex\advance\graphtemp by 1.856in
    \rlap{\kern 3.712in\lower\graphtemp\hbox to 0pt{\hss $\bu$\hss}}%
    \graphtemp=.5ex\advance\graphtemp by 1.856in
    \rlap{\kern 3.350in\lower\graphtemp\hbox to 0pt{\hss $\bu$\hss}}%
    \special{pn 11}%
    \special{pa 3350 1132}%
    \special{pa 3350 1349}%
    \special{fp}%
    \special{pa 3350 1349}%
    \special{pa 3531 1494}%
    \special{fp}%
    \special{pa 3531 1494}%
    \special{pa 3712 1494}%
    \special{fp}%
    \special{pa 3712 1494}%
    \special{pa 3712 1856}%
    \special{fp}%
    \special{pa 3712 1856}%
    \special{pa 3531 1494}%
    \special{fp}%
    \special{pa 3531 1494}%
    \special{pa 3350 1856}%
    \special{fp}%
    \special{pa 3350 1856}%
    \special{pa 3169 1494}%
    \special{fp}%
    \special{pa 3712 1494}%
    \special{pa 3712 1132}%
    \special{fp}%
    \graphtemp=.5ex\advance\graphtemp by 2.109in
    \rlap{\kern 3.350in\lower\graphtemp\hbox to 0pt{\hss $\hH$\hss}}%
    \graphtemp=.5ex\advance\graphtemp by 0.664in
    \rlap{\kern 3.763in\lower\graphtemp\hbox to 0pt{\hss $w_1$\hss}}%
    \graphtemp=.5ex\advance\graphtemp by 1.002in
    \rlap{\kern 3.564in\lower\graphtemp\hbox to 0pt{\hss $e_0$\hss}}%
    \graphtemp=.5ex\advance\graphtemp by 1.132in
    \rlap{\kern 4.255in\lower\graphtemp\hbox to 0pt{\hss $\bu$\hss}}%
    \graphtemp=.5ex\advance\graphtemp by 1.494in
    \rlap{\kern 4.436in\lower\graphtemp\hbox to 0pt{\hss $\bu$\hss}}%
    \graphtemp=.5ex\advance\graphtemp by 1.132in
    \rlap{\kern 4.617in\lower\graphtemp\hbox to 0pt{\hss $\bu$\hss}}%
    \graphtemp=.5ex\advance\graphtemp by 1.494in
    \rlap{\kern 4.798in\lower\graphtemp\hbox to 0pt{\hss $\bu$\hss}}%
    \graphtemp=.5ex\advance\graphtemp by 1.132in
    \rlap{\kern 4.979in\lower\graphtemp\hbox to 0pt{\hss $\bu$\hss}}%
    \graphtemp=.5ex\advance\graphtemp by 0.769in
    \rlap{\kern 4.979in\lower\graphtemp\hbox to 0pt{\hss $\bu$\hss}}%
    \graphtemp=.5ex\advance\graphtemp by 0.769in
    \rlap{\kern 4.798in\lower\graphtemp\hbox to 0pt{\hss $\bu$\hss}}%
    \graphtemp=.5ex\advance\graphtemp by 0.407in
    \rlap{\kern 4.798in\lower\graphtemp\hbox to 0pt{\hss $\bu$\hss}}%
    \graphtemp=.5ex\advance\graphtemp by 0.407in
    \rlap{\kern 4.436in\lower\graphtemp\hbox to 0pt{\hss $\bu$\hss}}%
    \graphtemp=.5ex\advance\graphtemp by 0.769in
    \rlap{\kern 4.436in\lower\graphtemp\hbox to 0pt{\hss $\bu$\hss}}%
    \special{pa 4255 1132}%
    \special{pa 4436 1494}%
    \special{fp}%
    \special{pa 4436 1494}%
    \special{pa 4798 1494}%
    \special{fp}%
    \special{pa 4798 1494}%
    \special{pa 4979 1132}%
    \special{fp}%
    \special{pa 4979 1132}%
    \special{pa 4979 769}%
    \special{fp}%
    \special{pa 4979 769}%
    \special{pa 4798 407}%
    \special{fp}%
    \special{pa 4798 407}%
    \special{pa 4436 407}%
    \special{fp}%
    \special{pa 4436 407}%
    \special{pa 4798 769}%
    \special{fp}%
    \special{pa 4798 769}%
    \special{pa 4617 1132}%
    \special{fp}%
    \special{pa 4617 1132}%
    \special{pa 4255 1132}%
    \special{fp}%
    \special{pa 4617 1132}%
    \special{pa 4436 1494}%
    \special{fp}%
    \special{pa 4617 1132}%
    \special{pa 4798 1494}%
    \special{fp}%
    \special{pa 4617 1132}%
    \special{pa 4979 1132}%
    \special{fp}%
    \special{pa 4436 407}%
    \special{pa 4436 769}%
    \special{fp}%
    \special{pa 4436 769}%
    \special{pa 4798 769}%
    \special{fp}%
    \special{pa 4798 769}%
    \special{pa 4979 769}%
    \special{fp}%
    \special{pa 4798 407}%
    \special{pa 4798 769}%
    \special{fp}%
    \special{pa 4798 769}%
    \special{pa 4979 1132}%
    \special{fp}%
    \graphtemp=.5ex\advance\graphtemp by 1.132in
    \rlap{\kern 4.164in\lower\graphtemp\hbox to 0pt{\hss $v'$\hss}}%
    \graphtemp=.5ex\advance\graphtemp by 0.878in
    \rlap{\kern 4.436in\lower\graphtemp\hbox to 0pt{\hss $v$\hss}}%
    \graphtemp=.5ex\advance\graphtemp by 0.951in
    \rlap{\kern 5.052in\lower\graphtemp\hbox to 0pt{\hss $e$\hss}}%
    \special{pn 28}%
    \special{pa 4979 1132}%
    \special{pa 4979 769}%
    \special{fp}%
    \special{pa 4436 769}%
    \special{pa 4436 407}%
    \special{fp}%
    \graphtemp=.5ex\advance\graphtemp by 1.349in
    \rlap{\kern 4.617in\lower\graphtemp\hbox to 0pt{\hss $\bu$\hss}}%
    \graphtemp=.5ex\advance\graphtemp by 1.494in
    \rlap{\kern 4.979in\lower\graphtemp\hbox to 0pt{\hss $\bu$\hss}}%
    \graphtemp=.5ex\advance\graphtemp by 1.856in
    \rlap{\kern 4.979in\lower\graphtemp\hbox to 0pt{\hss $\bu$\hss}}%
    \graphtemp=.5ex\advance\graphtemp by 1.856in
    \rlap{\kern 4.617in\lower\graphtemp\hbox to 0pt{\hss $\bu$\hss}}%
    \special{pn 11}%
    \special{pa 4617 1132}%
    \special{pa 4617 1349}%
    \special{fp}%
    \special{pa 4617 1349}%
    \special{pa 4798 1494}%
    \special{fp}%
    \special{pa 4798 1494}%
    \special{pa 4979 1494}%
    \special{fp}%
    \special{pa 4979 1494}%
    \special{pa 4979 1856}%
    \special{fp}%
    \special{pa 4979 1856}%
    \special{pa 4798 1494}%
    \special{fp}%
    \special{pa 4798 1494}%
    \special{pa 4617 1856}%
    \special{fp}%
    \special{pa 4617 1856}%
    \special{pa 4436 1494}%
    \special{fp}%
    \special{pa 4979 1494}%
    \special{pa 4979 1132}%
    \special{fp}%
    \graphtemp=.5ex\advance\graphtemp by 2.109in
    \rlap{\kern 4.617in\lower\graphtemp\hbox to 0pt{\hss $G_p$\hss}}%
    \graphtemp=.5ex\advance\graphtemp by 0.640in
    \rlap{\kern 4.523in\lower\graphtemp\hbox to 0pt{\hss $e_p$\hss}}%
    \graphtemp=.5ex\advance\graphtemp by 0.664in
    \rlap{\kern 5.048in\lower\graphtemp\hbox to 0pt{\hss $w_1$\hss}}%
    \graphtemp=.5ex\advance\graphtemp by 1.132in
    \rlap{\kern 5.703in\lower\graphtemp\hbox to 0pt{\hss $\bu$\hss}}%
    \graphtemp=.5ex\advance\graphtemp by 1.494in
    \rlap{\kern 5.884in\lower\graphtemp\hbox to 0pt{\hss $\bu$\hss}}%
    \graphtemp=.5ex\advance\graphtemp by 1.132in
    \rlap{\kern 6.065in\lower\graphtemp\hbox to 0pt{\hss $\bu$\hss}}%
    \graphtemp=.5ex\advance\graphtemp by 1.494in
    \rlap{\kern 6.247in\lower\graphtemp\hbox to 0pt{\hss $\bu$\hss}}%
    \graphtemp=.5ex\advance\graphtemp by 1.132in
    \rlap{\kern 6.428in\lower\graphtemp\hbox to 0pt{\hss $\bu$\hss}}%
    \graphtemp=.5ex\advance\graphtemp by 0.769in
    \rlap{\kern 6.428in\lower\graphtemp\hbox to 0pt{\hss $\bu$\hss}}%
    \graphtemp=.5ex\advance\graphtemp by 0.769in
    \rlap{\kern 6.247in\lower\graphtemp\hbox to 0pt{\hss $\bu$\hss}}%
    \graphtemp=.5ex\advance\graphtemp by 0.407in
    \rlap{\kern 6.247in\lower\graphtemp\hbox to 0pt{\hss $\bu$\hss}}%
    \graphtemp=.5ex\advance\graphtemp by 0.407in
    \rlap{\kern 5.884in\lower\graphtemp\hbox to 0pt{\hss $\bu$\hss}}%
    \graphtemp=.5ex\advance\graphtemp by 0.769in
    \rlap{\kern 5.884in\lower\graphtemp\hbox to 0pt{\hss $\bu$\hss}}%
    \special{pa 5703 1132}%
    \special{pa 5884 1494}%
    \special{fp}%
    \special{pa 5884 1494}%
    \special{pa 6247 1494}%
    \special{fp}%
    \special{pa 6247 1494}%
    \special{pa 6428 1132}%
    \special{fp}%
    \special{pa 6428 1132}%
    \special{pa 6428 769}%
    \special{fp}%
    \special{pa 6428 769}%
    \special{pa 6247 407}%
    \special{fp}%
    \special{pa 6247 407}%
    \special{pa 5884 407}%
    \special{fp}%
    \special{pa 5884 407}%
    \special{pa 6247 769}%
    \special{fp}%
    \special{pa 6247 769}%
    \special{pa 6065 1132}%
    \special{fp}%
    \special{pa 6065 1132}%
    \special{pa 5703 1132}%
    \special{fp}%
    \special{pa 6065 1132}%
    \special{pa 5884 1494}%
    \special{fp}%
    \special{pa 6065 1132}%
    \special{pa 6247 1494}%
    \special{fp}%
    \special{pa 6065 1132}%
    \special{pa 6428 1132}%
    \special{fp}%
    \special{pa 5884 407}%
    \special{pa 5884 769}%
    \special{fp}%
    \special{pa 5884 769}%
    \special{pa 6247 769}%
    \special{fp}%
    \special{pa 6247 769}%
    \special{pa 6428 769}%
    \special{fp}%
    \special{pa 6247 407}%
    \special{pa 6247 769}%
    \special{fp}%
    \special{pa 6247 769}%
    \special{pa 6428 1132}%
    \special{fp}%
    \graphtemp=.5ex\advance\graphtemp by 1.132in
    \rlap{\kern 5.613in\lower\graphtemp\hbox to 0pt{\hss $v'$\hss}}%
    \graphtemp=.5ex\advance\graphtemp by 0.878in
    \rlap{\kern 5.884in\lower\graphtemp\hbox to 0pt{\hss $v$\hss}}%
    \graphtemp=.5ex\advance\graphtemp by 0.951in
    \rlap{\kern 6.500in\lower\graphtemp\hbox to 0pt{\hss $e$\hss}}%
    \special{pn 28}%
    \special{pa 5884 769}%
    \special{pa 5884 407}%
    \special{fp}%
    \graphtemp=.5ex\advance\graphtemp by 1.349in
    \rlap{\kern 6.065in\lower\graphtemp\hbox to 0pt{\hss $\bu$\hss}}%
    \graphtemp=.5ex\advance\graphtemp by 1.494in
    \rlap{\kern 6.428in\lower\graphtemp\hbox to 0pt{\hss $\bu$\hss}}%
    \graphtemp=.5ex\advance\graphtemp by 1.856in
    \rlap{\kern 6.428in\lower\graphtemp\hbox to 0pt{\hss $\bu$\hss}}%
    \graphtemp=.5ex\advance\graphtemp by 1.856in
    \rlap{\kern 6.065in\lower\graphtemp\hbox to 0pt{\hss $\bu$\hss}}%
    \special{pn 11}%
    \special{pa 6065 1132}%
    \special{pa 6065 1349}%
    \special{fp}%
    \special{pa 6065 1349}%
    \special{pa 6247 1494}%
    \special{fp}%
    \special{pa 6247 1494}%
    \special{pa 6428 1494}%
    \special{fp}%
    \special{pa 6428 1494}%
    \special{pa 6428 1856}%
    \special{fp}%
    \special{pa 6428 1856}%
    \special{pa 6247 1494}%
    \special{fp}%
    \special{pa 6247 1494}%
    \special{pa 6065 1856}%
    \special{fp}%
    \special{pa 6065 1856}%
    \special{pa 5884 1494}%
    \special{fp}%
    \special{pa 6428 1494}%
    \special{pa 6428 1132}%
    \special{fp}%
    \graphtemp=.5ex\advance\graphtemp by 2.109in
    \rlap{\kern 6.065in\lower\graphtemp\hbox to 0pt{\hss $G'$\hss}}%
    \graphtemp=.5ex\advance\graphtemp by 0.640in
    \rlap{\kern 5.972in\lower\graphtemp\hbox to 0pt{\hss $e_p$\hss}}%
    \graphtemp=.5ex\advance\graphtemp by 0.951in
    \rlap{\kern 5.703in\lower\graphtemp\hbox to 0pt{\hss $\bu$\hss}}%
    \graphtemp=.5ex\advance\graphtemp by 0.769in
    \rlap{\kern 5.522in\lower\graphtemp\hbox to 0pt{\hss $\bu$\hss}}%
    \graphtemp=.5ex\advance\graphtemp by 0.407in
    \rlap{\kern 5.522in\lower\graphtemp\hbox to 0pt{\hss $\bu$\hss}}%
    \graphtemp=.5ex\advance\graphtemp by 0.045in
    \rlap{\kern 5.703in\lower\graphtemp\hbox to 0pt{\hss $\bu$\hss}}%
    \graphtemp=.5ex\advance\graphtemp by 0.045in
    \rlap{\kern 5.341in\lower\graphtemp\hbox to 0pt{\hss $\bu$\hss}}%
    \special{pa 5884 769}%
    \special{pa 5703 951}%
    \special{fp}%
    \special{pa 5703 951}%
    \special{pa 5522 769}%
    \special{fp}%
    \special{pa 5522 769}%
    \special{pa 5884 769}%
    \special{fp}%
    \special{pa 5884 769}%
    \special{pa 5522 407}%
    \special{fp}%
    \special{pa 5522 407}%
    \special{pa 5341 45}%
    \special{fp}%
    \special{pa 5341 45}%
    \special{pa 5703 45}%
    \special{fp}%
    \special{pa 5703 45}%
    \special{pa 5884 407}%
    \special{fp}%
    \special{pa 5884 407}%
    \special{pa 5522 407}%
    \special{fp}%
    \special{pa 5522 769}%
    \special{pa 5884 407}%
    \special{fp}%
    \special{pa 5522 407}%
    \special{pa 5703 45}%
    \special{fp}%
    \graphtemp=.5ex\advance\graphtemp by 1.548in
    \rlap{\kern 5.703in\lower\graphtemp\hbox to 0pt{\hss $H$\hss}}%
    \graphtemp=.5ex\advance\graphtemp by 0.588in
    \rlap{\kern 5.432in\lower\graphtemp\hbox to 0pt{\hss $J$\hss}}%
    \hbox{\vrule depth2.109in width0pt height 0pt}%
    \kern 6.500in
  }%
}%
}

\vspace{-1pc} 
\caption{Reattaching $J$ to form $G'$ from $G$ in
Theorem~\ref{mainthm}.\label{figmain}}
\end{figure}

Let $q=|V(H')|$.  Let $(\VEC vq1)$ be the $2$-simplicial ordering of $H'$ where
$v_q=v$ and $v_1=v'$.  Each vertex of $G$ not in $H'$ belongs to a $2$-tree
that shares exactly one edge with $H'$ and grows from that edge by adding
$2$-simplicial vertices.  Such an edge $e$ cannot be incident to $v$ or $v'$,
since $v$ and $v'$ are simplicial in $G$.  Either $e$ consists of the neighbors
of some $v_j$ when $v_j$ is deleted in $(\VEC vq1)$, or $e$ consists of $v_j$
and one such neighbor.  Let $j^*$ be the largest such index $j$.  Let the
resulting edge $e$ be the {\it crucial edge}, specified as the edge incident to
$v_{j^*}$ rather than the one joining the neighbors of $v_{j^*}$ if both
choices are available.  The example in Figure~\ref{figmain} shows the case 
where $e$ is incident to $v_{j^*}$.  Let $p=q-j^*+1$, and let $w_i=v_{j^*-1+i}$
for $1\le i\le p$.

Let $\hH$ be the $2$-tree obtained from $G$ by deleting $\VEC wp1$ and all the
vertices not in $H'$ that belong to the maximal $2$-tree sharing only the
crucial edge $e$ with $H'$.  In the language of Lemma~\ref{pathineq},
restoring $\VEC wp1$ yields a $2$-tree $G_p$ with the edge $e_p$ being
incident to $v$ and $e$ playing the role of $e'$ or $e_0$.
By Lemma~\ref{pathineq}, $T(G_p;e_p)>T(G_p;e)$.

Now let $H=G_p$, and let $J$ be the $2$-tree induced by the endpoints of $e$
and all the vertices outside $H'$ deleted from $G$ to form $\hH$; note that
$G=H\cup J$.  Let $G'$ be the $2$-tree obtained as $H\cup J$ by using $e_p$ as
the common edge instead of $e$.  Setting $S=\nul$ in Lemma~\ref{anyset} now
yields $T(G')>T(G)$.
\end{proof}

\section{2-Trees with the Fewest Spanning Trees}

We close by proving that the unique $n$-vertex $2$-tree having the fewest
spanning trees is $B_n$, the $2$-tree with $n-2$ simplicial vertices (see
Theorem~\ref{strictbound}).

\begin{theorem}\label{Tmin}
If $G$ is a $n$-vertex $2$-tree other than $B_n$, then $T(G)>T(B_n)$.
\end{theorem}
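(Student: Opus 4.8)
The plan is to induct on $n$, using the recursion that already appeared in the proofs of Theorems~\ref{listall} and~\ref{maincount}. We may assume $n\ge5$, since for $n\le4$ the $n$-vertex $2$-tree is unique (it is $B_n$), making the statement vacuous; this also anchors the induction at $n=5$. Let $v$ be a simplicial vertex of $G$ with $N(v)=\{x,y\}$, and set $H=G-v$. By Observation~\ref{obs1} (counting spanning trees of $G$ by the role of $v$),
$$
T(G)=2\,T(H)+T(H;xy).
$$
By Theorem~\ref{strictbound}, $T(B_n)=n2^{n-3}$, and one checks $2\,(n-1)2^{n-4}+2^{n-3}=n2^{n-3}$. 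So the theorem reduces to two facts: (a) $T(H)\ge T(B_{n-1})$ with equality only when $H\cong B_{n-1}$, which is exactly the induction hypothesis applied to the $(n-1)$-vertex $2$-tree $H$; and (b) a lower bound $T(H;xy)\ge 2^{n-3}$ which is moreover strict whenever $G\ne B_n$ and $H\cong B_{n-1}$. Granting (a) and (b) gives $T(G)\ge T(B_n)$ always, and in case of equality forces both $H\cong B_{n-1}$ and $T(H;xy)=2^{n-3}$, which (b) rules out unless $G=B_n$.

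For the general bound in (b) I would isolate a short lemma: for every $m$-vertex $2$-tree $H$ and every edge $e\in E(H)$, $T(H;e)\ge 2^{m-2}$. This goes by induction on $m$, with trivial base $m=2$. For $m\ge3$, choose a simplicial vertex $w$ of $H$ whose two neighbors $\{a,b\}$ are not the endpoints of $e$; such a $w$ exists because if $H\ne K_3$ it has two nonadjacent simplicial vertices (Dirac), of which at most one lies in the adjacent pair forming $e$, while if $H=K_3$ the vertex outside $e$ works. Conditioning on the role of $w$ as in Observation~\ref{obs1}: if $e=ab$ then $T(H;e)=2\,T(H-w;ab)$, and if $e\ne ab$ then $T(H;e)=2\,T(H-w;e)+T(H-w;\{e,ab\})\ge 2\,T(H-w;e)$. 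In either case the induction hypothesis (applied to the $(m-1)$-vertex $2$-tree $H-w$, noting $e\in E(H-w)$) gives $T(H;e)\ge 2\cdot2^{m-3}=2^{m-2}$. Applying this with $m=n-1$ yields $T(H;xy)\ge 2^{n-3}$.

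It remains to handle the strictness in (b). Suppose $H\cong B_{n-1}$ but $G\ne B_n$. Since $n\ge5$, the two spine vertices of $B_{n-1}$ have degree $n-2\ge3$, so the only simplicial vertices of $B_{n-1}$ are its $n-3$ page vertices; as $G\ne B_n$, the edge $xy$ is not the spine edge, hence $xy=sa$ with $s$ a page vertex and $a$ a spine vertex. Now count spanning trees of $B_{n-1}$ containing $sa$: either $sb\notin T$, so $s$ is a leaf at $a$ and deleting it gives an arbitrary spanning tree of $B_{n-1}-s\cong B_{n-2}$, contributing $T(B_{n-2})=(n-2)2^{n-5}$; or $sb\in T$, so $ab\notin T$ and each of the remaining $n-4$ page vertices attaches freely to $a$ or $b$, contributing $2^{n-4}$. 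Thus $T(H;xy)=T(B_{n-1};sa)=(n-2)2^{n-5}+2\cdot2^{n-5}=n\,2^{n-5}$. Since $n\ge5$, we get $n\,2^{n-5}>4\cdot2^{n-5}=2^{n-3}$, contradicting $T(H;xy)=2^{n-3}$. Hence no $n$-vertex $2$-tree other than $B_n$ attains $T(B_n)$, and $T(G)>T(B_n)$.

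The delicate point is the equality analysis of the last paragraph: the entire theorem comes down to the single comparison $n\,2^{n-5}$ versus $2^{n-3}$, which fails to be strict precisely at $n=4$, and that is exactly why the $4$-vertex case has to be removed from the induction by appealing to uniqueness of the $4$-vertex $2$-tree. Everything else (the recursion, the arithmetic identity for $B_n$, and the auxiliary lemma) is routine.
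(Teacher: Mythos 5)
Your proof is correct, but it follows a genuinely different route from the paper's. The paper does not induct on $n$ at all: it observes that $B_n$ is exactly the $2$-tree in which all simplicial vertices share one neighborhood, picks two simplicial vertices $v_1,v_2$ with \emph{distinct} neighborhoods in any $G\ne B_n$, and compares $G$ with the two graphs $G_1,G_2$ obtained from $H=G-\{v_1,v_2\}$ by giving both re-added vertices the neighborhood of $v_1$ (resp.\ $v_2$). A short count gives $2T(G)-T(G_1)-T(G_2)=2T(H;\{e_1,e_2\})>0$, so one of $G_1,G_2$ beats $G$ and $G$ cannot be the minimizer; uniqueness of $B_n$ as the only graph immune to this exchange finishes the argument. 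Your approach instead runs a direct induction on $n$ through the deletion recursion $T(G)=2T(H)+T(H;xy)$, supported by the auxiliary lemma that every edge of an $m$-vertex $2$-tree lies in at least $2^{m-2}$ spanning trees, plus the explicit value $T(B_{n-1};sa)=n2^{n-5}$ for a non-spine edge to break ties. What the paper's exchange argument buys is brevity and no equality analysis (at the cost of being non-constructive, resting on the existence of a minimizer over finitely many graphs); what yours buys is a self-contained induction and a reusable quantitative byproduct (the edge lower bound, which nicely complements Theorem~\ref{lubound}), at the cost of the delicate tie-breaking computation you correctly flag. One wording slip to fix: in your lemma you should choose $w$ to be a simplicial vertex that is \emph{not an endpoint of} $e$ (which is what your Dirac-based existence argument actually produces and what the subsequent two cases $e=ab$ and $e\ne ab$ require), rather than one ``whose two neighbors are not the endpoints of $e$,'' which read literally would exclude the case $e=ab$ and is not always achievable.
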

\begin{proof}
It suffices to show that some $n$-vertex $2$-tree other than $G$ has
fewer spanning trees than $G$.  The book $B_n$ is characterized by the
fact that every two simplicial vertices have the same neighborhood.
If $G\ne B_n$, then let $v_1$ and $v_2$ be simplicial vertices in $G$
whose neighborhoods are not equal.  Let $H=G-\{v_1,v_2\}$.

For $i\in\{1,2\}$, let $e_i$ be the edge joining the neighbors of $v_i$,
let $\beta_i=T(H;e_i)$, and let $G_i$ be the graph obtained from $H$ by adding
two simplicial vertices each having neighborhood $N_G(v_i)$ (see
Figure~\ref{Tmin}).  Let $\gamma=T(H;\{e_1,e_2\})$.

Spanning trees in $G$ have two, three or four edges incident to $\{v_1,v_2\}$;
we compute $T(G)=4T(H)+2\beta_1+2\beta_2+\gamma$.  Spanning trees in $G_i$
have two or three edges not in $H$; we compute $T(G_i)=4T(H)+4\beta_i$.  Thus
\begin{eqnarray*}
2T(G)&=&8T(H)+4\beta_1+4\beta_2+2\gamma\\
T(G_1)+T(G_2)&=&8T(H)+4\beta_1+4\beta_2.
\end{eqnarray*}
Since a spanning tree can be grown from any subtree, $\gamma\ne0$.  Thus
$T(G_1)+T(G_2)<2T(G)$, and $G_1$ or $G_2$ has fewer spanning trees than $G$.
\end{proof}

\vspace{-1pc}
\begin{figure}[h]
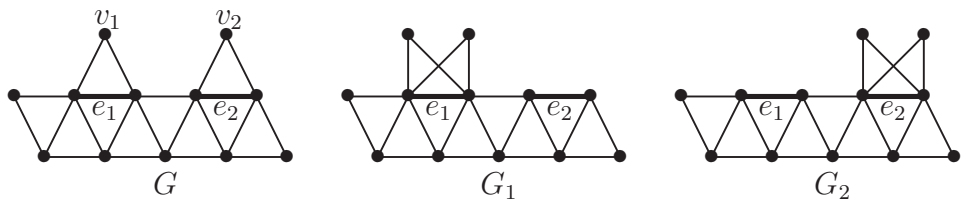

\gpic{
\expandafter\ifx\csname graph\endcsname\relax \csname newbox\endcsname\graph\fi
\expandafter\ifx\csname graphtemp\endcsname\relax \csname newdimen\endcsname\graphtemp\fi
\setbox\graph=\vtop{\vskip 0pt\hbox{%
    \graphtemp=.5ex\advance\graphtemp by 0.730in
    \rlap{\kern 0.198in\lower\graphtemp\hbox to 0pt{\hss $\bu$\hss}}%
    \graphtemp=.5ex\advance\graphtemp by 0.730in
    \rlap{\kern 0.516in\lower\graphtemp\hbox to 0pt{\hss $\bu$\hss}}%
    \graphtemp=.5ex\advance\graphtemp by 0.730in
    \rlap{\kern 0.833in\lower\graphtemp\hbox to 0pt{\hss $\bu$\hss}}%
    \graphtemp=.5ex\advance\graphtemp by 0.730in
    \rlap{\kern 1.151in\lower\graphtemp\hbox to 0pt{\hss $\bu$\hss}}%
    \graphtemp=.5ex\advance\graphtemp by 0.730in
    \rlap{\kern 1.468in\lower\graphtemp\hbox to 0pt{\hss $\bu$\hss}}%
    \special{pn 11}%
    \special{pa 198 730}%
    \special{pa 1468 730}%
    \special{fp}%
    \special{pa 1468 730}%
    \special{pa 1310 413}%
    \special{fp}%
    \special{pa 1310 413}%
    \special{pa 40 413}%
    \special{fp}%
    \special{pa 40 413}%
    \special{pa 198 730}%
    \special{fp}%
    \special{pa 198 730}%
    \special{pa 357 413}%
    \special{fp}%
    \special{pa 357 413}%
    \special{pa 516 730}%
    \special{fp}%
    \special{pa 516 730}%
    \special{pa 675 413}%
    \special{fp}%
    \special{pa 675 413}%
    \special{pa 833 730}%
    \special{fp}%
    \special{pa 833 730}%
    \special{pa 992 413}%
    \special{fp}%
    \special{pa 992 413}%
    \special{pa 1151 730}%
    \special{fp}%
    \special{pa 1151 730}%
    \special{pa 1310 413}%
    \special{fp}%
    \graphtemp=.5ex\advance\graphtemp by 0.413in
    \rlap{\kern 0.040in\lower\graphtemp\hbox to 0pt{\hss $\bu$\hss}}%
    \graphtemp=.5ex\advance\graphtemp by 0.413in
    \rlap{\kern 0.357in\lower\graphtemp\hbox to 0pt{\hss $\bu$\hss}}%
    \graphtemp=.5ex\advance\graphtemp by 0.413in
    \rlap{\kern 0.675in\lower\graphtemp\hbox to 0pt{\hss $\bu$\hss}}%
    \graphtemp=.5ex\advance\graphtemp by 0.413in
    \rlap{\kern 0.992in\lower\graphtemp\hbox to 0pt{\hss $\bu$\hss}}%
    \graphtemp=.5ex\advance\graphtemp by 0.413in
    \rlap{\kern 1.310in\lower\graphtemp\hbox to 0pt{\hss $\bu$\hss}}%
    \graphtemp=.5ex\advance\graphtemp by 0.095in
    \rlap{\kern 0.516in\lower\graphtemp\hbox to 0pt{\hss $\bu$\hss}}%
    \graphtemp=.5ex\advance\graphtemp by 0.095in
    \rlap{\kern 1.151in\lower\graphtemp\hbox to 0pt{\hss $\bu$\hss}}%
    \special{pn 28}%
    \special{pa 357 413}%
    \special{pa 675 413}%
    \special{fp}%
    \special{pa 992 413}%
    \special{pa 1310 413}%
    \special{fp}%
    \special{pn 11}%
    \special{pa 357 413}%
    \special{pa 516 95}%
    \special{fp}%
    \special{pa 516 95}%
    \special{pa 675 413}%
    \special{fp}%
    \special{pa 992 413}%
    \special{pa 1151 95}%
    \special{fp}%
    \special{pa 1151 95}%
    \special{pa 1310 413}%
    \special{fp}%
    \graphtemp=.5ex\advance\graphtemp by 0.476in
    \rlap{\kern 0.516in\lower\graphtemp\hbox to 0pt{\hss $e_1$\hss}}%
    \graphtemp=.5ex\advance\graphtemp by 0.476in
    \rlap{\kern 1.151in\lower\graphtemp\hbox to 0pt{\hss $e_2$\hss}}%
    \graphtemp=.5ex\advance\graphtemp by 0.889in
    \rlap{\kern 0.833in\lower\graphtemp\hbox to 0pt{\hss $G$\hss}}%
    \graphtemp=.5ex\advance\graphtemp by 0.000in
    \rlap{\kern 0.532in\lower\graphtemp\hbox to 0pt{\hss $v_1$\hss}}%
    \graphtemp=.5ex\advance\graphtemp by 0.000in
    \rlap{\kern 1.167in\lower\graphtemp\hbox to 0pt{\hss $v_2$\hss}}%
    \graphtemp=.5ex\advance\graphtemp by 0.730in
    \rlap{\kern 1.944in\lower\graphtemp\hbox to 0pt{\hss $\bu$\hss}}%
    \graphtemp=.5ex\advance\graphtemp by 0.730in
    \rlap{\kern 2.262in\lower\graphtemp\hbox to 0pt{\hss $\bu$\hss}}%
    \graphtemp=.5ex\advance\graphtemp by 0.730in
    \rlap{\kern 2.579in\lower\graphtemp\hbox to 0pt{\hss $\bu$\hss}}%
    \graphtemp=.5ex\advance\graphtemp by 0.730in
    \rlap{\kern 2.897in\lower\graphtemp\hbox to 0pt{\hss $\bu$\hss}}%
    \graphtemp=.5ex\advance\graphtemp by 0.730in
    \rlap{\kern 3.214in\lower\graphtemp\hbox to 0pt{\hss $\bu$\hss}}%
    \special{pa 1944 730}%
    \special{pa 3214 730}%
    \special{fp}%
    \special{pa 3214 730}%
    \special{pa 3056 413}%
    \special{fp}%
    \special{pa 3056 413}%
    \special{pa 1786 413}%
    \special{fp}%
    \special{pa 1786 413}%
    \special{pa 1944 730}%
    \special{fp}%
    \special{pa 1944 730}%
    \special{pa 2103 413}%
    \special{fp}%
    \special{pa 2103 413}%
    \special{pa 2262 730}%
    \special{fp}%
    \special{pa 2262 730}%
    \special{pa 2421 413}%
    \special{fp}%
    \special{pa 2421 413}%
    \special{pa 2579 730}%
    \special{fp}%
    \special{pa 2579 730}%
    \special{pa 2738 413}%
    \special{fp}%
    \special{pa 2738 413}%
    \special{pa 2897 730}%
    \special{fp}%
    \special{pa 2897 730}%
    \special{pa 3056 413}%
    \special{fp}%
    \special{pn 28}%
    \special{pa 2103 413}%
    \special{pa 2421 413}%
    \special{fp}%
    \special{pa 2738 413}%
    \special{pa 3056 413}%
    \special{fp}%
    \graphtemp=.5ex\advance\graphtemp by 0.413in
    \rlap{\kern 1.786in\lower\graphtemp\hbox to 0pt{\hss $\bu$\hss}}%
    \graphtemp=.5ex\advance\graphtemp by 0.413in
    \rlap{\kern 2.103in\lower\graphtemp\hbox to 0pt{\hss $\bu$\hss}}%
    \graphtemp=.5ex\advance\graphtemp by 0.413in
    \rlap{\kern 2.421in\lower\graphtemp\hbox to 0pt{\hss $\bu$\hss}}%
    \graphtemp=.5ex\advance\graphtemp by 0.413in
    \rlap{\kern 2.738in\lower\graphtemp\hbox to 0pt{\hss $\bu$\hss}}%
    \graphtemp=.5ex\advance\graphtemp by 0.413in
    \rlap{\kern 3.056in\lower\graphtemp\hbox to 0pt{\hss $\bu$\hss}}%
    \graphtemp=.5ex\advance\graphtemp by 0.095in
    \rlap{\kern 2.103in\lower\graphtemp\hbox to 0pt{\hss $\bu$\hss}}%
    \graphtemp=.5ex\advance\graphtemp by 0.095in
    \rlap{\kern 2.421in\lower\graphtemp\hbox to 0pt{\hss $\bu$\hss}}%
    \special{pn 11}%
    \special{pa 2103 95}%
    \special{pa 2103 413}%
    \special{fp}%
    \special{pa 2103 413}%
    \special{pa 2421 95}%
    \special{fp}%
    \special{pa 2421 95}%
    \special{pa 2421 413}%
    \special{fp}%
    \special{pa 2421 413}%
    \special{pa 2103 95}%
    \special{fp}%
    \graphtemp=.5ex\advance\graphtemp by 0.476in
    \rlap{\kern 2.262in\lower\graphtemp\hbox to 0pt{\hss $e_1$\hss}}%
    \graphtemp=.5ex\advance\graphtemp by 0.476in
    \rlap{\kern 2.897in\lower\graphtemp\hbox to 0pt{\hss $e_2$\hss}}%
    \graphtemp=.5ex\advance\graphtemp by 0.889in
    \rlap{\kern 2.579in\lower\graphtemp\hbox to 0pt{\hss $G_1$\hss}}%
    \graphtemp=.5ex\advance\graphtemp by 0.730in
    \rlap{\kern 3.690in\lower\graphtemp\hbox to 0pt{\hss $\bu$\hss}}%
    \graphtemp=.5ex\advance\graphtemp by 0.730in
    \rlap{\kern 4.008in\lower\graphtemp\hbox to 0pt{\hss $\bu$\hss}}%
    \graphtemp=.5ex\advance\graphtemp by 0.730in
    \rlap{\kern 4.325in\lower\graphtemp\hbox to 0pt{\hss $\bu$\hss}}%
    \graphtemp=.5ex\advance\graphtemp by 0.730in
    \rlap{\kern 4.643in\lower\graphtemp\hbox to 0pt{\hss $\bu$\hss}}%
    \graphtemp=.5ex\advance\graphtemp by 0.730in
    \rlap{\kern 4.960in\lower\graphtemp\hbox to 0pt{\hss $\bu$\hss}}%
    \special{pa 3690 730}%
    \special{pa 4960 730}%
    \special{fp}%
    \special{pa 4960 730}%
    \special{pa 4802 413}%
    \special{fp}%
    \special{pa 4802 413}%
    \special{pa 3532 413}%
    \special{fp}%
    \special{pa 3532 413}%
    \special{pa 3690 730}%
    \special{fp}%
    \special{pa 3690 730}%
    \special{pa 3849 413}%
    \special{fp}%
    \special{pa 3849 413}%
    \special{pa 4008 730}%
    \special{fp}%
    \special{pa 4008 730}%
    \special{pa 4167 413}%
    \special{fp}%
    \special{pa 4167 413}%
    \special{pa 4325 730}%
    \special{fp}%
    \special{pa 4325 730}%
    \special{pa 4484 413}%
    \special{fp}%
    \special{pa 4484 413}%
    \special{pa 4643 730}%
    \special{fp}%
    \special{pa 4643 730}%
    \special{pa 4802 413}%
    \special{fp}%
    \special{pn 28}%
    \special{pa 3849 413}%
    \special{pa 4167 413}%
    \special{fp}%
    \special{pa 4484 413}%
    \special{pa 4802 413}%
    \special{fp}%
    \graphtemp=.5ex\advance\graphtemp by 0.413in
    \rlap{\kern 3.532in\lower\graphtemp\hbox to 0pt{\hss $\bu$\hss}}%
    \graphtemp=.5ex\advance\graphtemp by 0.413in
    \rlap{\kern 3.849in\lower\graphtemp\hbox to 0pt{\hss $\bu$\hss}}%
    \graphtemp=.5ex\advance\graphtemp by 0.413in
    \rlap{\kern 4.167in\lower\graphtemp\hbox to 0pt{\hss $\bu$\hss}}%
    \graphtemp=.5ex\advance\graphtemp by 0.413in
    \rlap{\kern 4.484in\lower\graphtemp\hbox to 0pt{\hss $\bu$\hss}}%
    \graphtemp=.5ex\advance\graphtemp by 0.413in
    \rlap{\kern 4.802in\lower\graphtemp\hbox to 0pt{\hss $\bu$\hss}}%
    \graphtemp=.5ex\advance\graphtemp by 0.095in
    \rlap{\kern 4.484in\lower\graphtemp\hbox to 0pt{\hss $\bu$\hss}}%
    \graphtemp=.5ex\advance\graphtemp by 0.095in
    \rlap{\kern 4.802in\lower\graphtemp\hbox to 0pt{\hss $\bu$\hss}}%
    \special{pn 11}%
    \special{pa 4484 95}%
    \special{pa 4484 413}%
    \special{fp}%
    \special{pa 4484 413}%
    \special{pa 4802 95}%
    \special{fp}%
    \special{pa 4802 95}%
    \special{pa 4802 413}%
    \special{fp}%
    \special{pa 4802 413}%
    \special{pa 4484 95}%
    \special{fp}%
    \graphtemp=.5ex\advance\graphtemp by 0.476in
    \rlap{\kern 4.008in\lower\graphtemp\hbox to 0pt{\hss $e_1$\hss}}%
    \graphtemp=.5ex\advance\graphtemp by 0.476in
    \rlap{\kern 4.643in\lower\graphtemp\hbox to 0pt{\hss $e_2$\hss}}%
    \graphtemp=.5ex\advance\graphtemp by 0.889in
    \rlap{\kern 4.325in\lower\graphtemp\hbox to 0pt{\hss $G_2$\hss}}%
    \hbox{\vrule depth0.952in width0pt height 0pt}%
    \kern 5.000in
  }%
}%
}

\vspace{-1pc} 
\caption{Graphs $G_1$ and $G_2$ compared with $G$ in 
Theorem~\ref{Tmin}.\label{figTmin}}
\end{figure}


\end{document}